\documentclass[journal,onecolumn,11pt]{IEEEtran}
\pdfoutput=1
\usepackage[draft, blank]{ieeefig}
\hyphenation{}
\usepackage{cite}
\usepackage{amsmath}
\usepackage{amsthm}
\usepackage{amsfonts}
\usepackage{amssymb}
\usepackage{bm}
\usepackage{bbm}
\usepackage{dsfont}
\usepackage{graphicx}
\usepackage{subcaption}
\usepackage{color}


\newtheorem{theorem}{Theorem}
\newtheorem{lemma}{Lemma}
\newtheorem{corollary}{Corollary}


\newtheorem{assumption}{Assumption}

\theoremstyle{remark}

\newtheorem*{remark*}{Remark}


\newcommand{\abs}[1]{\left|#1\right|}
\newcommand{\mop}[1]{\mathcal{#1}}
\newcommand{\opn}[1]{\operatorname{#1}}
\newcommand{\vect}[1]{\mathbf{#1}}
\newcommand{\norm}[1]{\left\|#1\right\|}

\newcommand{\epf}{\bar{\varepsilon}}
\newcommand{\epo}{\bar{\varepsilon}_{sc}}
\newcommand{\epot}{\tilde{\bar{\varepsilon}}_{sc}}
\newcommand{\epn}{\varepsilon_0}

\newcommand{\ip}[2]{\langle #1, #2 \rangle}

\newcommand*{\from}{\colon}

\newcommand{\mrm}[1]{\mathrm{#1}}





\DeclareMathAlphabet{\bi}{OML}{cmm}{b}{it}
\DeclareMathAlphabet{\bcal}{OMS}{cmsy}{b}{n}
\DeclareMathAlphabet{\brmn}{OT1}{cmr}{bx}{n}

\DeclareMathSymbol{\R}{\mathalpha}{AMSb}{"52}

\newcommand{\n}{\nabla}

\newcommand{\A}{\alpha}

\usepackage{amsthm}
\usepackage{amsfonts}

\newcommand{\D}{d}



\def \x{\mathbf{x}}
\def \n{\mathbf{n}}
\def \y{\mathbf{y}}
\def \u{\mathbf{u}}

\def \r{\mathbf{r}}
\def \z{\mathbf{z}}
\def \s{\mathbf{s}}
\def \v{\mathbf{v}}

\def \b{\mathbf{b}}
\def \w{\mathbf{w}}

\def \a{\mathbf{a}}
\def \e{\mathbf{e}}
\def \c{\mathbf{c}}
\def \d{\mathbf{d}}

\def \R{\mathbf{R}}
\def \D{\mathbf{D}}

\def \K{\mathbf{K}}

\def \A{\mathbf{A}}

\def \0{\mathbf{0}}

\begin{document}
\title{Passive Polarimetric Multistatic Radar Detection of Moving
  Targets}
\author{\IEEEauthorblockN{Il-Young Son and Birsen
    Yaz{\i}c{\i}$^*$,~\IEEEmembership{Senior Member, IEEE}}%
  \thanks{I. Son and B. Yaz{\i}c{\i} are with the Department of
    Electrical, Computer, and Systems Engineering, Renssealer
    Polytechnic Institute, Troy, NY 12180, USA}
\thanks{This material is based upon work supported by the Air Force Office of Scientific Research
  (AFOSR) under award number FA9550-16-1-0234,
  and by the National Science Foundation (NSF) under Grant No. CCF-1421496.}
\thanks{* Corresponding author.}
}

\IEEEtitleabstractindextext{
  \begin{abstract}
    We study the exploitation of polarimetric diversity in passive multistatic radar for detecting
    moving targets.  We first derive a data model that takes into account
    polarization and anisotropy of targets inherent in multistatic configurations. Unlike
    conventional isotropic models in which targets are modeled as a collection of uniform spheres,
    we model targets as a collection of dipole antennas with unknown directions. We consider a
    multistatic configuration in which each receiver is equipped with a pair of orthogonally
    polarized antennas, one directed to a scene of interest collecting target-path signal and
    another one having a direct line-of-sight to a transmitter-of-opportunity collecting
    direct-path signal.
    We formulate the detection of moving target problem in a generalized likelihood ratio test
    framework under the assumption that direct-path signal is available. We show that the result can be
    reduced to the case in which the direct-path signal is absent. We present a method for estimating
    the dipole moments of targets. Extensive numerical simulations show the performance of both the
    detection and the dipole estimation tasks with and without polarimetric diversity.
\end{abstract}

\begin{IEEEkeywords}
passive radar, polarimetry, multistatic, moving targets
\end{IEEEkeywords}}

\maketitle
\IEEEdisplaynontitleabstractindextext

\section{Introduction}
\subsection{Overview}
Passive radar senses the environment using transmitters of opportunity such as digital television
stations, wideband cell-phone towers, radio transmissions, satellites, etc.  The past decade has
seen a sharp rise in the availability of such transmitters and with it a growing research interest
in passive radar
systems~\cite{wang09,wang10,wang12,LWang12,Wang11,Baker05_1,baker05_2,Dawidowicz12,Kulpa12,Wacks14,Yarman10,mason15}.



The polarimetric diversity in the context of passive radar, refers to each receiver being equipped
with a pair of linear and orthogonally polarized antennas.
Polarimetric diversity in passive radar is important for the following reasons:
\begin{itemize}
\item In conventional passive radars,
  sources,
  scatterers and receivers are typically assumed to be isotropic and polarimetric diversity is not
  considered~\cite{son2007radar,wang10,LWang12,Wang11,hack12,hack14,hack2014centralized,bialkowski2011generalized,palmer2013dvb,colone2011direction,cui14}.
  Under the isotropy assumption, the vector wave equation simplifies to a scalar one.
  However, for the multistatic
  configuration
  the directionality/polarization of electomagnetic (EM) waves from the scatterers
  and antennas become increasingly important and the isotropy assumption
  may no longer hold.
\item
  Polarization diversity provides an additional antenna at each receiver
  that is polarized in an orthogonal direction. This means that even if the signal received is weak
  at one antenna, a relatively strong signal is received at the other ensuring the effectiveness of
  spatial diversity offered by the multistatic configuration.
\item Polarimetric radar can provide additional information that is not available in conventional
  non-polarimetric radars, including polarization characteristics of targets that can aid in
  classiﬁcation and recognition tasks~\cite{webster12,lee09, cloude96,
    boerner07,moreira13, jackson09}.
\end{itemize}
These reasons motivate the exploitation of polarimetric diversity in passive radar to produce
better detection, imaging and classification performance than that of conventional non-polarimetric
radar.


In this paper, we consider a multistatic configuration in which each receiver is equipped with a
pair of polarimetrically diverse antennas, one of which has a direct line-of-sight to a transmitter
of opportunity collecting direct-path signal and another one is directed to a scene of interest
collecting target-path signal. We derive a target model and a received signal model for the
scattered field from moving
targets in a
polarimetrically diverse configuration starting from first principle.  Our derivation is based on
the vector wave equation, using the dyadic Green's function and dyadic reflectivity taking
into account the linear motion of a moving target.  We reduce the dyadic reflectivity to a spatially
distributed dipole target model.  This model is derived from the eigendecomposition of the target
dyad. This decomposition induces a three colocated dipoles for the target at each spatial location with
each eigenvector interpreted as a dipole moment.  To reduce the size of the problem, we consider
only the dominant eigenvalue/eigenvector - effectively modeling a scatterer as a single dipole with
unknown dipole direction and reflectivity.
We then model the receive and transmit antennas as dipole antennas to arrive at the final form of
the
received signal model.

We assume that the transmitted waveform and transmit antenna polarization state is unknown and formulate the moving target detection problem as a generalized likelihood ratio test. 
We derive the test statistic for both
the case in which the direct-path signal is available and one in which it is not. We present a
method of estimating target dipole direction and hence its polarization state. In
  \cite{son17analysis}, we analyze the performance gains in moving target detection due to polarimetric
  diversity in passive multistatic radar.
Extensive numerical simulations show that polarimetric diversity provides superior detection
performance than that of non-diverse case. Additionally, it provides information on anistoropic
characteristic of scatterers that can be used for target recognition tasks.
\subsection{Related Work}
The exploitation of polarimetry in radar application has its origins as far back as the mid 20th
century~\cite{sinclair48,sinclair50,kennaugh52,deschamps51,graves56}. However traditionally, radar polarimetry has been applied to active monostatic
systems, as in polarimetric Synthetic Aperture Radar (SAR) and polarimetric SAR
interferometry~\cite{moreira13,boerner07,voccola13,voccola11}.
In~\cite{webster2014b} multistatic polarimetric active radar has been studied for imaging of moving
targets.
In
this work, the multistatic system exhibits polarimetric diversity in both transmitter and receiver
which are both modeled as dipole antennas. The imaging scheme, however, does not extract
polarization information about the target scene and consists of weighted sum of correlations
between the received signal and scaled and delayed versions of the transmitted waveforms. In
contrast, we study multistatic passive radar systems in which a priori knowledge on the transmitted
waveforms or the polarimetric state of the transmitters of opportunity is not
available. Furthermore, our objective is not only to estimate the reflectivity but also the
polarization state of targets.

The dipole target model was first studied
in~\cite{gustafsson04}.  In~\cite{voccola13},
this model is utilized and a filtered-backprojection type imaging scheme for monostatic active
polarimetric SAR was developed.~\cite{voccola13} differs from our work in that it considers imaging of static
targets using an active, monostatic conﬁguration. In addition, it constrains target dipole moments
to lie on a flat ground plane. We do not assume such constraints on the dipole moment direction.

Multistatic passive radar detection problems have been studied
in~\cite{wang10,wang12,LWang12,hack14,hack2014centralized,bialkowski2011generalized}.  These works do
not consider polarimetric diversity nor the estimation of target's polarization state.
In~\cite{wang10}, spatially resolved detection of stationary
scatterers is studied for
multiple scattering environment.  This is extended to moving targets
in~\cite{wang12,LWang12}.  In these papers, the authors make explicit isotropic assumptions for the target,
transmitters and receivers.
Our GLRT-based detection scheme is
similar to the one presented in~\cite{hack14,hack2014centralized}.  However, in addition to our target and data models
being different, we derive our test-statistic by considering the full continuous data and noise
processes.  We also relax the assumption that
all noise processes have a common variance.

Passive polarimetry has been
explored in remote sensing applications in the
optical and infrared spectrum~\cite{tyo06}. In acoustic imaging, ambient polarized acoustic noise
field is
used to recover the scattering matrix~\cite{wapenaar13}.
In~\cite{colone2014,colone2015}, polarimetric diversity for passive radar detection (PCL) is
studied.
In this work, they experimentally assess the effectiveness of polarimetric diversity in mitigating
interference on the direct-path signal.  Not only is their target, data model and test-statistic different
from present work, but they do not consider estimation of the target's polarization state.  In
addition, we consider the case in which direct-path signal is unavailable.
In~\cite{son15}, we studied multistatic polarimetric passive radar for a single or widely spatially
separated stationary targets without direct-path signals. In the present work, we
consider ground moving targets and derive spatially resolved target detection scheme along with
estimation of its polarization state.
In addition, we consider two cases: when the direct-path
signal is available at each receiver and when no such signal is available.
\subsection{Organization of the Paper}
The paper is organized as follows: in Section~\ref{sec:forward} we derive the model of scattered
vector field from a moving target.  In Section~\ref{sec:dipoletarg}, we
derive the dipole target model from the eigendecomposition of dyadic permittivity.  The
polarimetric data model is derived in Section~\ref{sec:data_model}.
In Section~\ref{sec:glrt}, we use the polarimetric data model to derive the target detection and target dipole moment estimation problems.
In Section~\ref{sec:simulations}, we present numerical simulations to demonstrate the performance
of the detection and estimation methods.  
 Section~\ref{sec:conclusions} concludes the paper.
\section{Incident and Scattered Field Models}\label{sec:forward}
\subsection{Geometric Configuration of Multistatic Passive Radar}
Fig.~\ref{fig:configuration} depicts a typical configuration of a distributed passive radar that is of interest in this paper.
We assume that there are $M$ stationary receivers distributed about a scene of interest and a
single transmitter of opportunity.  The location of the $k$-th receiver is denoted as
$\a^r_k\in\mathbb{R}^3$ and the location of the transmitter of opportunity is denoted as $\a^t\in\mathbb{R}^3$. We assume that
each receiver exhibits polarimetric diversity.
In other words, each receiver consists of a  pair of orthogonal dipole antennas
receiving scattered signals from the scene of interest.  Furthermore, we consider a case in which at each
receiver, the \emph{direct-path}\footnote{\emph{Direct-path} signal is the signal received
directly from the transmitter without any background scattering.} signal and
\emph{target-path}\footnote{\emph{Target-path} travels from the transmitter to the target and then
  scatters from the target and travels to the receiver.} signal is separated.
This
separation can be achieved, for instance, by employing a beamforming technique on an array of receive
antennas at each receiver location as in~\cite{hack14}.
 The direct-path signal is also referred to as the \emph{reference} channel and the
 target-path signal is referred to as the \emph{surveillance} channel in the literature~\cite{hack14,cui14}.
We assume that the transmitter consists of a single dipole antenna and that the dipole direction of
the transmitter is not known.
\begin{figure}[!ht]
  \centering
  \includegraphics[width=3in]{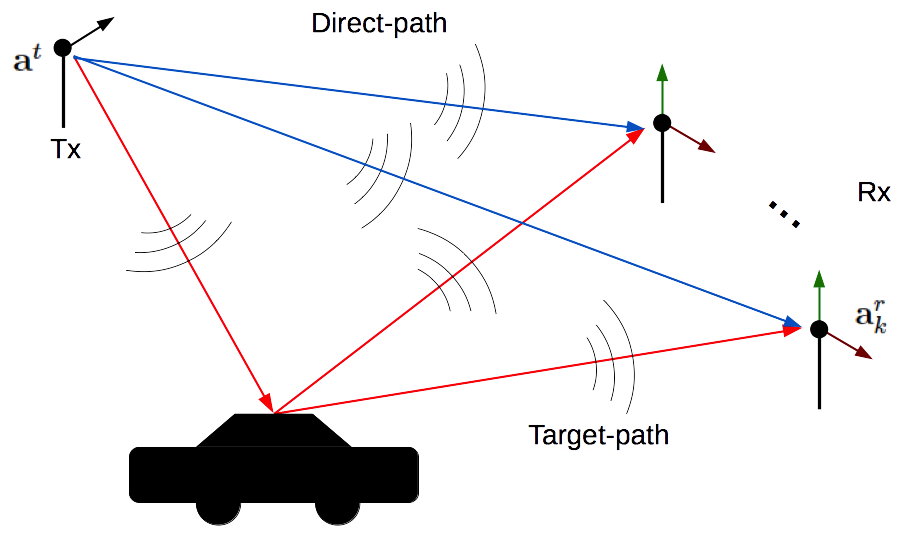}
  \caption{Distributed Polarimetric Passive Radar Scenario.  $\a^r_k$ is the $k$-th receiver
    location and $\a^t$ is the transmitter location.}\label{fig:configuration}
\end{figure}

\subsection{Scattered Field Model with Dyadic Reflectivity}
Since we are interested in modeling the directionality of EM waves scattered from the target of
interest, we begin with vector wave equation.  We assume that the medium is reciprocal and lossless
and is characterized by constant real-valued permeability, $\mu_0$ and spatially varying
real-valued dyadic permittivity, $\epf\from\mathbb{R}^3\to\mop{S}^3$ where $\mop{S}^3$ is the
space of $3\times
3$ real symmetric matrices~\cite{chew99}.  The vector wave equation is then~\cite{chew99}
\begin{equation}
  \nabla\times\nabla\times\mop{E} =
  \omega^2\mu_0\epf\mop{E} + i\omega\mu_0\mop{J}\label{eq:weq1}
\end{equation}
where $\mop{E}$ is the electric field, $\mop{J}$ is the source term and $\omega$ is the temporal
frequency.

\begin{assumption}\label{as:isotropic}
  We assume that the background medium is homogeneous and
  isotropic with
  scalar permittivity $\epn\in\mathbb{R}$.
\end{assumption}

Under Assumption~\ref{as:isotropic}, we model $\epf$ as a perturbation of $\epn$, i.e.,
\begin{equation}
  \epf(\x) = \epn I_3 + \epo(\x)
\end{equation}
where $I_3$ is a $3\times 3$ identity matrix.  We assume that $\epo$ is compactly supported.


We can then rewrite~\eqref{eq:weq1} as
\begin{equation}
  \nabla\times\nabla\times\mop{E} - k^2\mop{E} =
  \omega^2\mu_0\epo\mop{E} + i\omega\mu_0\mop{J}\label{eq:weq2}
\end{equation}
where $k = \omega\sqrt{\mu_0\epn}$ is the wavenumber.
Here, we note that $\mop{E} = \mop{E}^{in} + \mop{E}^{sc}$ where $\mop{E}^{in}$ is the incident
field originating from the source $\mop{J}$ and $\mop{E}^{sc}$ is the scattered field.  Using this
fact, we can split~\eqref{eq:weq2} into two vector wave equations as
\begin{equation}
  \nabla\times\nabla\times\mop{E}^i - k^2\mop{E}^i = i\omega\mu_0\mop{J}\label{eq:weqi}
\end{equation}
and
\begin{equation}
  \nabla\times\nabla\times\mop{E}^{sc} - k^2\mop{E}^{sc} = \omega^2\mu_0\epo\mop{E}.\label{eq:weqsc1}
\end{equation}
The solutions to the  vector wave equations in~\eqref{eq:weqi} and~\eqref{eq:weqsc1} are given by~\cite{chew99}
\begin{equation}
  \mop{E}^{in}(\mathbf{x},\omega) = i\omega\mu_0\int
  \hat{G}(\mathbf{x},\mathbf{x}',\omega)\mop{J}(\mathbf{x}',\omega)d\mathbf{x}'\label{eq:Ei}
\end{equation}
and
\begin{equation}
  \mop{E}^{sc}(\mathbf{y},\omega) = \omega^2\mu_0\int \hat{G}(\mathbf{y},\mathbf{x},\omega)\epo(\mathbf{x})\mop{E}(\mathbf{x},\omega)d\mathbf{x}\label{eq:Esc1}
\end{equation}
where  $\hat{G}(\y,\x,\omega)$ is the dyadic Green's function\footnote{In~\eqref{eq:dyadicG}, $c_0$
  is the wave speed in free-space and $\nabla\nabla$ is the Hessian operator.} given
by~\cite{chew99,voccola13}
\begin{equation}
  \hat{G}(\mathbf{y},\mathbf{x},\omega) = \left(I + \frac{\nabla\nabla}{k^2}\right)\frac{\mrm{e}^{\mrm{i}\omega\abs{\mathbf{y}-\mathbf{x}}/c_0}}{4\pi\abs{\mathbf{y}-\mathbf{x}}}.\label{eq:dyadicG}
\end{equation}
\eqref{eq:Esc1} is the vector version of the well-known Lippmann-Schwinger
equation~\cite{colton98}.  We assume a weak surface scattering and make the Born approximation
to linearize~\eqref{eq:Esc1}, arriving at
\begin{equation}
  \mop{E}^{sc}(\mathbf{y},\omega) \approx \omega^2\mu_0\int \hat{G}(\mathbf{y},\mathbf{x},\omega)\epo(\mathbf{x})\mop{E}^{in}(\mathbf{x},\omega)d\mathbf{x}.\label{eq:Esc2}
\end{equation}

\subsection{Scattered and Incident Field From a Moving Target}

In the previous section, we derived the general form of the scattered EM vector field.   In
this
section, we derive the model for the scattered field from a moving scatterer based
on~\eqref{eq:Esc1}.  We begin with the following assumption.

\begin{assumption}
  We assume that the target at $\x\in\mathbb{R}^3$ at time $t=0$ is moving with constant velocity $\v_{\x}\in\mathbb{R}^3$.  Let $\z = \x + \v_{\x}t$ be
  location of the target at time $t$.
\end{assumption}

From~\eqref{eq:Esc1}, we express the scattered field from the moving target in time domain as
\begin{equation}
  \begin{aligned}
  \mop{E}^{sc}(\mathbf{y},t) \approx \int
  &G(\mathbf{y}, \mathbf{x}+\v\tau,t-\tau)\mop{Q}(\mathbf{x},\v)\\
  &\partial^2_{\tau}\mop{E}^{in}(\mathbf{x}+\v\tau,\tau)d\tau d\mathbf{x}d\v
  \end{aligned}\label{eq:Esc3d}
\end{equation}
where $G(\y,\x,t)$ is the inverse Fourier transform of~\eqref{eq:dyadicG} and
\begin{equation}
  \mop{Q}(\x,\v) = \mu_0\epo(\x)\delta(\v-\v_{\x}).
\end{equation}
The model in~\eqref{eq:Esc3d} is similar to the one derived
in~\cite{wang12} for the scalar scattered field from a moving target.  We have extended this to the
vector field case using the solution to the vector wave equation.
This  model captures anisotropic scattering from targets in multistatic configurations.  In addition, consideration of the vector scattered field allows us to
model the polarimetry of the scattered field.

We assume that the ground topography $\psi\from\mathbb{R}^2\to\mathbb{R}$ is known.  Let a target
be initially located at $\x = [\bi{x}, \psi(\bi{x})]^T$ and moving with velocity $\v_{\x} = [\bi
v_{\x},\nabla_{\bi x}\psi(\bi x)\cdot\bi v_{\x}]^T$ where
$\bi{x},\bi{v}_{\x}\in\mathbb{R}^2$. Thus, we can write the reflectivity function, $\mop{Q}$, that depends on $\x, \v\in\mathbb{R}^3$ in
terms of $\tilde{\mop{Q}}$ that depends on $\bm{x},\bm{v}\in\mathbb{R}^2$, as follows:
\begin{equation}
  \mop{Q}(\x,\v) = \tilde{\mop{Q}}(\bi{x},\bi{v})\delta(x_3-\psi(\bi{x}))\delta(v_3-\nabla\psi(\bi x)\cdot\bi v_{\x})\label{eq:Q}
\end{equation}
where
\begin{equation}
  \tilde{\mop{Q}}(\bm{x},\bm{v}) =\mu_0\epot(\bm{x})\delta(\bm{v} -
  \bm{v}_{\x}).\label{eq:Qtilde}
\end{equation}
We refer to $\tilde{\mop{Q}}$ as the \emph{dyadic phase space reflectivity function}.  Note that $\epot$ is equal to the dyadic permittivity, $\epo$, with
the third spatial variable $x_3$ constrained to be $x_3 = \psi(\bm{x})$.  We
approximate~\eqref{eq:Qtilde} as
\begin{equation}
  \tilde{\mop{Q}}(\bm{x},\bm{v}) \approx \mu_0\epot(\bm{x})\varphi(\bm{v} - \bm{v}_{\x})
\end{equation}
where $\varphi$ is a smooth function approximating $\delta$, such as a Gaussian with a small
variance.
Inserting~\eqref{eq:Q} into~\eqref{eq:Esc3d} and integrating out the third component of $\x$ and
$\v$, we get
\begin{equation}
  \begin{aligned}
  \mop{E}^{sc}(\mathbf{y},t) \approx \int
  &G(\mathbf{y},\x+\v\tau,t-\tau) \tilde{\mop{Q}}(\bi{x},\bi{v})\\
  &\partial^2_{\tau}\mop{E}^{in}(\x+\v\tau,\tau)d\tau d\bi{x}d\bi{v}.
  \end{aligned}\label{eq:EscMT}
\end{equation}
For a typical ground moving target, the distance covered by the moving target while the
transmitted signal travels from the transmitter to the target and then from the target to the
receiver is much
smaller than the distance from the target to the receiver. Under this assumption,  we make the
following first-order approximation:
\begin{equation}
  \abs{\x+\v t-\y}\approx\abs{\x-\y}+\widehat{(\x-\y)}\cdot\v t\label{eq:taylor}
\end{equation}
where $\hat{\x}$ denotes the unit vector in the direction of $\x$.
Plugging~\eqref{eq:taylor} into the phase in~\eqref{eq:dyadicG} we have\footnote{In~\eqref{eq:Gvt}, we made the approximation $\abs{\x+\v t-\y}\approx\abs{\x-\y}$ for the denominator.}
\begin{equation}
  \begin{aligned}
  G(\x+\v t,\y,t) \approx \int &\left(I +
    \frac{\nabla\nabla}{k^2}\right)\frac{\mrm{e}^{\mrm{i}\omega\abs{\mathbf{y}-\mathbf{x}}/c_0}}{4\pi\abs{\mathbf{y}-\mathbf{x}}}\\
  &\mrm{e}^{\mrm{i}\omega (1+ \widehat{(\x-\y)}\cdot\v/c_0)t}d\omega.
  \end{aligned}\label{eq:Gvt}
\end{equation}
Thus, plugging~\eqref{eq:Gvt} into the inverse Fourier transform of~\eqref{eq:Ei}, we have
\begin{equation}
  \mop{E}^{in}(\x+\v\tau,\tau)\approx
  \mop{E}^{in}(\x,(1+\widehat{(\x-\y)}\cdot\v/c_0)\tau).\label{eq:Einappx}
\end{equation}
Now, using~\eqref{eq:Einappx} and~\eqref{eq:Gvt} in~\eqref{eq:EscMT},
the scattered field at the $k$-th receiver becomes
\begin{equation}
  \begin{aligned}
  \mop{E}^{sc}(\a_k^r,t) =
  \int &G(\a_k^r,\x,t-\alpha_k^r(\bi x,\bi
  v)\tau)\tilde{\mop{Q}}(\bi x,\bi v)\\
  &\partial_{\tau}^2\mop{E}^{in}(\x,\alpha^t(\bi x,\bi v)\tau)d\tau
  d\bi x d\bi v
  \end{aligned}\label{eq:EscMT2}
\end{equation}
where
\begin{align}
  \alpha_k^r(\bi x,\bi v) &= 1 -
  \frac{\widehat{(\x-\a_k^r)}\cdot\v}{c_0}\label{eq:alphar}\\
  \alpha^t(\bi x,\bi v) &= 1+ \frac{\widehat{(\x-\a^t)}\cdot\v}{c_0}.\label{eq:alphat}
\end{align}
Making the change of variables, $\tau' =
\alpha_k^r(\bi x,\bi v)\tau$ we have
\begin{equation}
  \begin{aligned}
  \mop{E}^{sc}(\a_k^r,t) =
  \int &G(\a_k^r,\x,t-\tau')\tilde{\mop{Q}}(\bi x,\bi v)\alpha_k^r(\bi
  x,\bi v)\\
  &\partial_{\tau'}^2\mop{E}^{in}(\x,\alpha_k(\bi x,\bi v)\tau')d\tau'
  d\bi x d\bi v
  \end{aligned}\label{eq:EscMT3}
\end{equation}
where
\begin{equation}
  \alpha_k(\bi x,\bi v) = \frac{\alpha^t(\bi x,\bi
    v)}{\alpha_k^r(\bi x,\bi v)}.
\end{equation}
In temporal Fourier domain,~\eqref{eq:EscMT3} becomes
\begin{equation}
  \begin{aligned}
  \mop{E}^{sc}(\a_k^r,\omega) = \int
  &G(\a_k^r,\x,\omega)\frac{\omega^2}{\alpha_k(\bi x,\bi
    v)^3}\tilde{\mop{Q}}(\bi x,\bi v)\\
  &\mop{E}^{in}\left(\x,\frac{\omega}{\alpha_k(\bi x,\bi v)}\right)
  d\bi x d\bi v.
  \end{aligned}\label{eq:EscMT4}
\end{equation}
For any realistic ground moving target, we can safely assume that $\abs{\v}\ll c_0$.  Thus, we make
the approximation that $\alpha^3_k\approx 1$ so that~\eqref{eq:EscMT4} becomes\footnote{By Cauchy-Schwarz we have that
  $\hat{\x}\cdot\v\leq \abs{\v}$.  Thus, from~\eqref{eq:alphar} and~\eqref{eq:alphat}, we
  have that $(\alpha_k^r)^3\leq 1+O(\frac{\abs{\v}}{c_0})$ and similar for $\alpha^t$.  Along
  with the assumption that $\abs{\v}\ll c_0$ we have $\alpha_k^3\approx 1$.}
\begin{equation}
  \begin{aligned}
  \mop{E}^{sc}(\a_k^r,\omega) = \omega^2\int
  &G(\a_k^r,\x,\omega)\tilde{\mop{Q}}(\bi x,\bi v)\\
  &\mop{E}^{in}\left(\x,\frac{\omega}{\alpha_k(\bi x,\bi v)}\right)
  d\bi x d\bi v.
  \end{aligned}\label{eq:EscMT5}
\end{equation}

\section{Dipole Target Model}\label{sec:dipoletarg}
By reciprocity and the fact that $\epot$ is assumed to be real-valued,
$\epot(\bm{x})$ admits an
eigendecomposition.  Let $\rho_1$, $\rho_2$, $\rho_3$ be the eigenvalues of $\epot$ and
$\e_1$, $\e_2$, $\e_3$ be the corresponding eigenvectors.  Then, the $\tilde{\mop{Q}}\mop{E}^{in}$ term
in~\eqref{eq:EscMT5} can be written as
\begin{equation}
  \tilde{\mop{Q}}\mop{E}^{in} = \sum_{i=1}^3 \varphi\mu_0\rho_i\left(\e_i^T\mop{E}^{in}\right)\e_i\label{eq:eigdecomp}.
\end{equation}
Note that we have suppressed $\bm{x}$ and $\bm{v}$ dependencies for $\rho_i$, $\e_i$, and $\varphi$  for notational simplicity.
The decomposition in~\eqref{eq:eigdecomp} can be interpreted as a source with $3$ colocated dipole antennas,
each with dipole moment $\e_i$~\cite{gustafsson04}.

\begin{assumption}
We approximate~\eqref{eq:eigdecomp} with the largest eigenvalue and the corresponding eigenvector.
\end{assumption}
Let $\rho$ be the largest eigenvalue and $\e_{sc}$ be the corresponding
eigenvector.  Then,
\begin{equation}
  \epot(\bm{x}) \approx \rho(\bm{x})\e_{sc}(\bm{x})\e_{sc}^T(\bm{x}).
\end{equation}
This approximation is equivalent to modeling the scatterer at
$\x\in\mathbb{R}^3$
by a short dipole antenna.  Under this model, an extended target becomes a
collection of dipole antennas.  Such a dipole model for scatterers was first studied by
Gustafsson in~\cite{gustafsson04} and more recently by Voccola
et. al. in~\cite{voccola13,voccola11}.
 Fig.~\ref{fig:dipole}
illustrates the idea of the dipole target model.
\begin{figure}[!ht]
  \centering
  \includegraphics[width=2.5in]{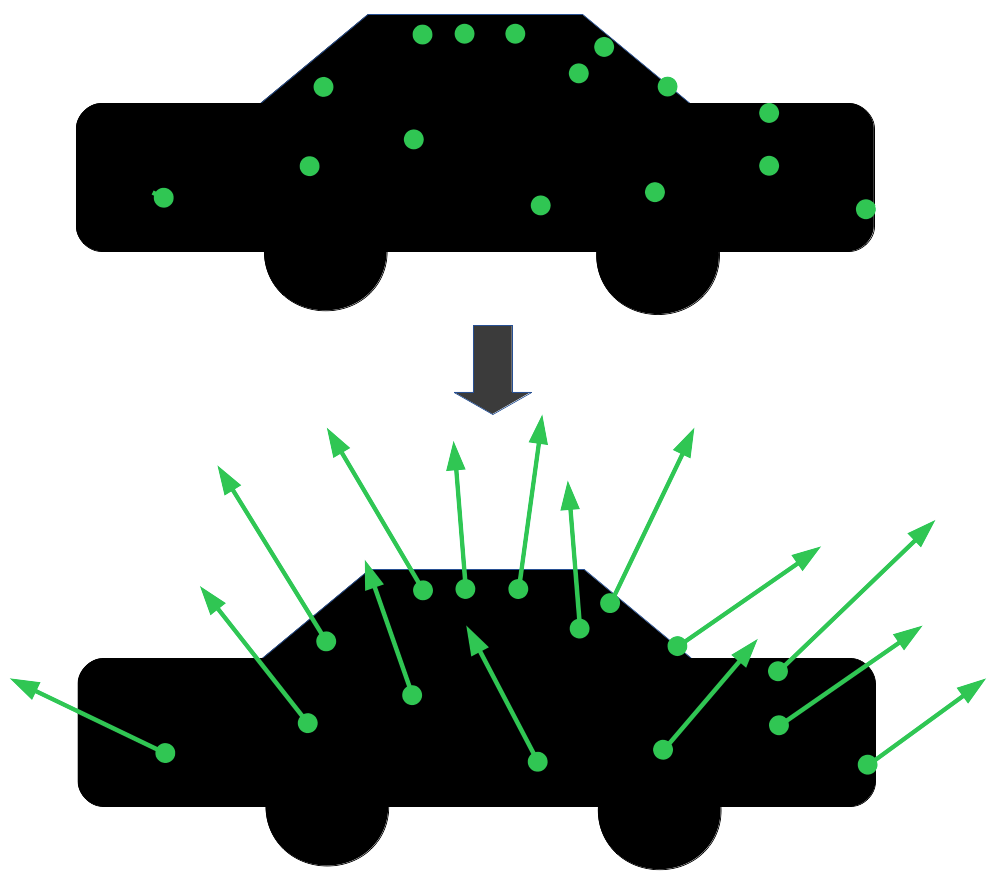}
  \caption{Dipole model of target.  Target is modeled by a collection of dipole
    antennas in lieu of traditional model as collection of points scatterers.}\label{fig:dipole}
\end{figure}
\section{Polarimetric Data Model}\label{sec:data_model}
In the previous sections, we derived the models for the incident and the scattered fields from a
moving target and the dipole target model approximation.  In this section, we utilize these models
to derive a model for
the signal received by polarimetrically diverse dipole receivers due to a signal originating from a
dipole transmitter and scattered by a moving target.
\subsection{Model for the Incident Field from a Short Dipole Antenna}
Let $\e^t$ denote the dipole moment of the transmit antenna.  Under the assumption that the length of the antenna is small compare to the
distance from the antenna to the scene, we can approximate~\eqref{eq:Ei} as
(see~\cite{voccola11,voccola13,son15})
\begin{equation}
  \mop{E}^{in}(\x,\omega) = \mrm{e}^{\mrm{i}\omega\abs{\x-\a^t}/c_0}A_t(\bi x,\omega)\hat{p}(\omega)\r^{\perp}_t(\bi x)\label{eq:Ein}
\end{equation}
where $\hat{p}(\omega)$ is the Fourier transform of the transmitted waveform $p(t)$, $A_t$ is
transmitter related terms including the
antenna beam patterns and the geometric spreading factors and
\begin{equation}
  \r^{\perp}_t(\bi x) = -\widehat{(\x - \a^t)}\times \widehat{(\x - \a^t)}\times \e^t.\label{eq:rperpt}
\end{equation}
Again, $\hat{\x}$ denotes the unit vector in the direction of $\x$ and $\r^{\perp}_t$ is a triple
vector cross product term that expresses the vector $\e^t$ projected onto the plane
with normal $\widehat{(\x-\a^t)}$. 


\subsection{The Model for the Ideal Target-Path Signal}
Let $\e_{k,H}^r$ and  $\e_{k,V}^r$ denote the dipole moments of the antennas at the $k$-th
receiver where $H$ and $V$ denote the horizontal and vertical dipole moments, respectively. Under
the assumption that the lengths of the dipole antennas are small as compared to the distance from
the
target to the receiver, the received signal at the $k$-th receiver can be modeled as~\cite{voccola11,voccola13,son15}
\begin{equation}
  \begin{aligned}
  \bi d_k^{0,TP}(\omega) = \omega^2\int &\mrm{e}^{\mrm{i}\omega
    \phi_{k}(\bi x,\bi v)/c_0}A_{t}\left(\bi
    x,\frac{\omega}{\alpha_k(\bi x,\bi
      v)}\right)\\
  &\hat{p}\left(\frac{\omega}{\alpha_k(\bi x,\bi v)}\right)
  \A_k^r(\bi x,\omega)\tilde{\bm{q}_{\e}}(\bi x,\bi v) d\bi x d\bi v,
  \end{aligned}\label{eq:dk}
\end{equation}
where
\begin{align}
  \tilde{\bm{q}_{\e}}(\bi x,\bi v) &=
  \rho(\bi x)\varphi(\bi v-\bi v_{\x})\e_{sc}(\bi x)\ip{\r^{\perp}_t(\bi x)}{\e_{sc}(\bi x)}\\
  \phi_{k}(\bi x,\bi v) &= \abs{\x-\a_k^r} +
  \frac{\abs{\x-\a^t}}{\alpha_k(\bi x,\bi v)}\label{eq:phi}\\
  \A_k^r(\bi x,\omega) &= [A_{r,k,H}(\bi
  x,\omega)\r_{r,k,H}^{\perp}(\bi x),A_{r,k,V}(\bi
  x,\omega)\r_{r,k,V}^{\perp}(\bi x)]^T\label{eq:Akr}\\
  \r_{r,k,p}^{\perp}(\bi x) &=
  -\widehat{(\x-\a_k^r)}\times
  \widehat{(\x-\a_k^r)}\times\e_{k,p}^r\quad p\in\{H,V\}.
\end{align}
Fig.~\ref{fig:receiveddata} illustrates the process of receiving the scattered signal from a dipole
transmitter, dipole target, and dipole receiver.  Note that the received signal depends on
$\r^{\perp}_{t}$, $\r^{\perp}_{r,k,p}$, and $\e_{sc}$.  More specifically, the strengths of the
signals are directly proportional to the scalar product between $\r^{\perp}_t$ and $\e_{sc}$ and
$\r^{\perp}_{r,k,p}$ and $\e_{sc}$.  This implies that the dipole-target model captures the
anisotropic scattering, because $\r^{\perp}_{r,k,p}$ depends on the look direction
$\widehat{(\x-\a_k^r)}$.  If $\e_{sc}$ is roughly parallel to $\widehat{(\x-\a_k^r)}$,
the received signal becomes weak.  Conversely, if $\e_{sc}$ is roughly orthogonal to
$\widehat{(\x-\a_k^r)}$, the received signal becomes strong.
\begin{figure}
  \centering
  \includegraphics[width=4in]{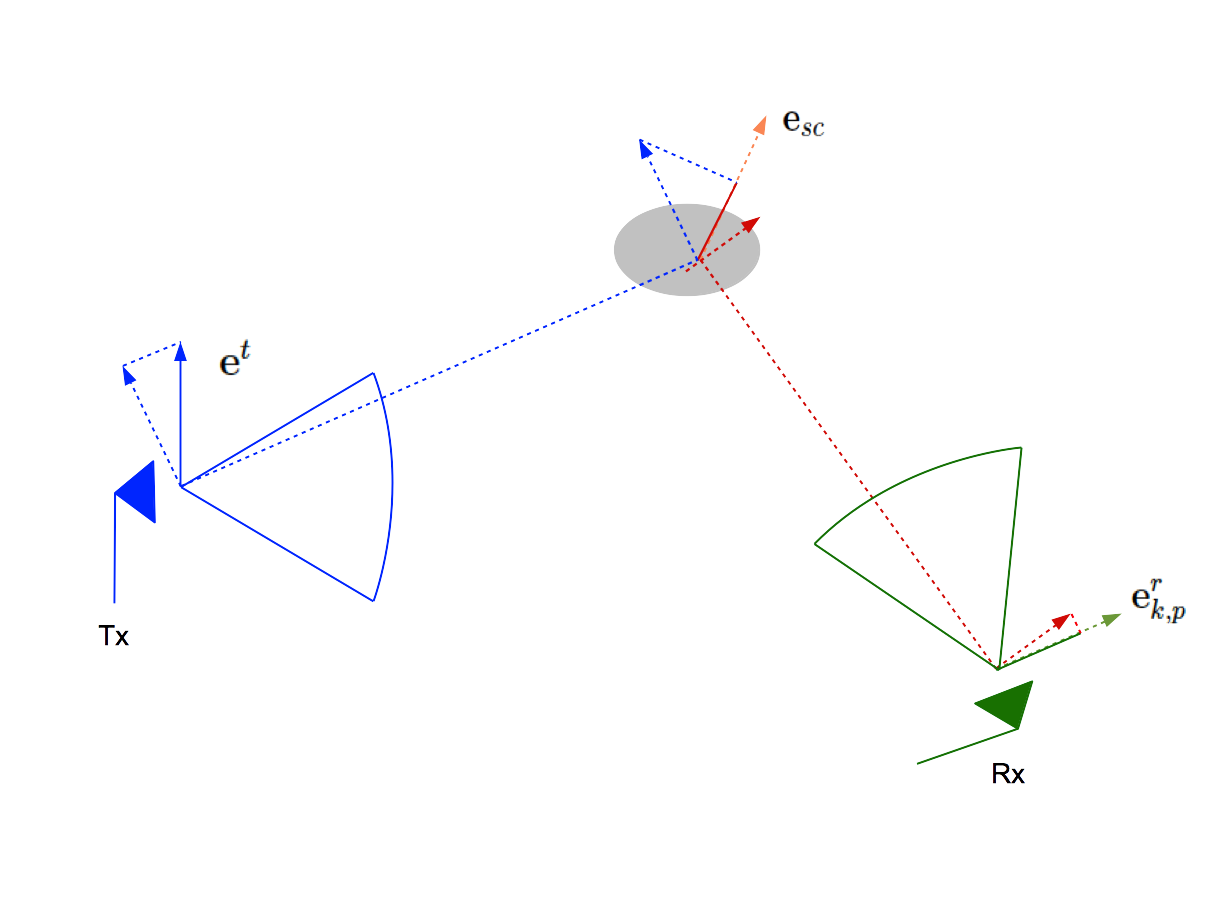}
  \caption{An illustration of the transmission of the electric field from a dipole transmitter,
    scattered by a
    dipole target, and then received by a dipole receiver.  Note that the strength of the scattered
    signal not only depends on the orientation of the dipole moments of the receiver, transmitter,
    and scatterer, but also the look directions from the transmitter to the scatterer and from
    the scatterer to
    the receiver.  This implies that the the strength of the scattered signal depends on the
    direction of the underlying field.  Hence, the dipole model captures anisotropic
    scattering.}\label{fig:receiveddata}
\end{figure}

The $A_{r,k,p}$ are terms related to the receive antennas including the
antenna beam patterns and the geometric spreading factors.
We
assume $A_{t}$ and $A_{r,k,p}$ are slowly varying functions of frequency, $\omega$.
Furthermore, we assume a narrowband (w.r.t. the center frequency, $\omega_0$) transmitted waveform
having frequency support
$\omega\in\Omega_0:=[\omega_0-\frac{B}{2},\omega_0+\frac{B}{2}]$.  Let
$p(t)=\tilde{p}(t)e^{i\omega_0 t}$ where $\tilde{p}(t)$ is a slowly varying function of $t$ such that
the approximation $\tilde{p}(\alpha_k t)\approx \tilde{p}(t)$ holds.  Thus,
$\hat{\tilde{p}}\left(\frac{\omega}{\alpha_k}\right) \approx \alpha_k\hat{\tilde{p}}(\omega)$.
Since $\hat{p}(\omega) = \hat{\tilde{p}}(\omega-\omega_0)$, we have
$\hat{p}\left(\frac{\omega}{\alpha_k}\right) \approx
\alpha_k\hat{\tilde{p}}(\omega-\alpha_k\omega_0)$ with bandlimitted support $\omega\in\Omega_k:=
[\alpha_k\omega_0-\frac{B}{2}, \alpha_k\omega_0+\frac{B}{2}]$, $k=1,\dots,M$.

\begin{assumption}
  We assume that $A_t$ and $A_{r,k,p}$ do not vary much within $\Omega_k$, and
  approximate $A_t$ and $A_{r,k,p}$ as constants w.r.t. $\omega$.\footnote{We justify this assumption
    by noting that most sources of opportunities are narrowband and antenna beampatterns are close
    to
    uniform w.r.t. $\omega$ within the band.}  
\end{assumption}

With this assumption,~\eqref{eq:dk} becomes
\begin{equation}
  \begin{aligned}
  \bi d_k^{0,TP}(\omega)\approx \omega_0^2\int &\mrm{e}^{\mrm{i}(\omega + \alpha_k\omega_0)
    \phi_{k}(\bi x,\bi v)/c_0}A_{t}(\bi
  x)\hat{\tilde{p}}(\omega)\\
  &\A_k^r(\bi x)\tilde{\bm{q}}_{\e}(\bi x,\bi v) d\bi x d\bi v,
  \end{aligned}\label{eq:dk2}
\end{equation}
where we make the substitution $\omega = \omega'-\alpha_k\omega_0$, $\omega'\in\Omega_k$ so that
$\omega\in\Omega_B:=[-\frac{B}{2},\frac{B}{2}]$.  Furthermore, we make the approximation for the scalar multiple
$(\omega+\alpha_k\omega_0)^2\approx \omega_0^2$ under the assumption that $\abs{\v}\ll c_0$ and
$B\ll c_0$.



\subsection{Model for the Ideal Direct-Path Signal}
The direct-path signal received at the $k$-th
receiver is proportional to the scalar product of the incident field given in~\eqref{eq:Ein} with the
dipole
moment of the receive
antenna, and does not include the target related terms present in the target-path signal defined in~\eqref{eq:dk2}.
The direct-path signal is also scaled by terms related to the receiver denoted by
$A^{DP}_{r,k,H}$ and $A^{DP}_{r,k,V}$.  Note that these terms may differ from
$A_{r,k,H}$ and $A_{r,k,V}$ defined in~\eqref{eq:dk} (for instance, $A_{r,k,H}$ and $A_{r,k,V}$
include a geometric spreading factor from target to the receiver that
$A_{r,k,H}^{DP}$ and $A_{r,k,V}^{DP}$ do not).  At the $k$-th receiver, the transmitter related term, $A_t(\x)$
in~\eqref{eq:Ein} is evaluated at the receiver location, $\x = \a_k^r$.  Let $A^{dp}_{t,k} =
A_t(\a_k^r)$, $k=1,\dots, M$.  Then, the ideal, noise-free direct-path signal at the $k$-th receiver becomes
\begin{equation}
  \bi d_{k}^{0,DP}(\omega) = \mrm{e}^{\mrm{i}(\omega+\omega_0)\abs{\a_k^r -
      \a^t}/c_0}A_{t,k}^{DP}\hat{\tilde{p}}(\omega)\A_{k,DP}^r\e^t,\quad \omega\in\Omega_B\label{eq:dp}
\end{equation}
where
\begin{equation}
  \A_{k,DP}^r = [A_{r,k,H}^{DP}\r_{r,k,H}^{\perp}(\a^t),A_{r,k,V}^{DP}\r_{r,k,V}^{\perp}(\a^t)]^T.
\end{equation}
Note that in~\eqref{eq:dp} we use the vector identity $\a\cdot(\b\times\c) = \c\cdot(\a\times\b) =
\b\cdot(\c\times\a)$ to make the substitution $\r_t^{\perp}(\a_k^r)\cdot \e_{k,p}^r =
\r_{r,k,p}^{\perp}(\a^t)\cdot \e^t$ where $p\in\{H,V\}$.

\subsection{Model for the Noisy Received Signals}\label{sec:noise_model}
\eqref{eq:dk2} and~\eqref{eq:dp} represent ideal, noise-free received signal models.  In practice,
however, all received signals are corrupted by noise. Thus, we model target-path and direct-path
received signals as
\begin{align}
  {\bm d}_k^{TP}(\omega) &= {\bm d}_k^{0,TP}(\omega) + {\bm n}_k^{TP}(\omega)\\
  {\bm d}_k^{DP}(\omega) &= {\bm d}_k^{0,DP} + {\bm n}_k^{DP}(\omega).
\end{align}
where
\begin{equation}
  {\bm n}_k^{TP}(\omega) = \left[ n_{k,H}^{TP}(\omega), n_{k,V}^{TP}(\omega)\right]^T
\end{equation}
and
\begin{equation}
  {\bm n}_k^{DP}(\omega) = \left[ n_{k,H}^{DP}(\omega), n_{k,V}^{DP}(\omega)\right]^T
\end{equation}
are the random noise processes present in the target-path and direct-path signals, respectively, at
the $k$-th receive antennas.

\begin{assumption}\label{as:noise}
  We assume that for all $\omega, \omega'$, ${\bm
    n}_k^{TP}(\omega)$ is independent of ${\bm n}_l^{TP}(\omega')$ for $k\neq l$ and that $n_{k,H}^{TP}(\omega)$ is
  independent of $n_{k,V}^{TP}(\omega')$.  Similarly, for all $\omega, \omega'$, ${\bm
    n}_k^{DP}(\omega)$ is independent of ${\bm n}_l^{DP}(\omega')$ for $k\neq l$ and  $n_{k,H}^{DP}(\omega)$ is
  independent of $n_{k,V}^{DP}(\omega')$.  Furthermore, ${\bm n}_k^{TP}(\omega)$ is
  independent of ${\bm n}_l^{DP}(\omega')$ for all $k,l,\omega,\omega'$.
  Finally, we assume that all noise processes are zero-mean Gaussian white noise processes,
  i.e., the auto-covariances functions $E[n_{k,p}^{TP}(\omega)(n^{TP}_{k,p}(\omega'))^*] =
  (\sigma_{k,p}^{TP})^2\delta(\omega-\omega')$
  and $E[n_{k,p}^{DP}(\omega)(n_{k,p}^{DP}(\omega'))^*] =
  (\sigma_{k,p}^{DP})^2\delta(\omega-\omega')$ for $k=1,...,M$ and $p\in\{H,V\}$.
\end{assumption}


We now consider the second-order statistics of all measurements.
These statistics can be succinctly expressed as matrix-valued covariance functions by first
considering the following random vectors:
\begin{equation}
\n^{TP}(\omega) = [n_{1,H}^{TP}(\omega),n_{1,V}^{TP}(\omega),\dots,n_{M,H}^{TP}(\omega),n_{M,V}^{TP}(\omega)]^T
\end{equation}
and
\begin{equation}
  \n^{DP}(\omega) = [n^{DP}_{1,H}(\omega),n^{DP}_{1,V}(\omega),\dots,n^{DP}_{M,H}(\omega),
  n^{DP}_{M,V}(\omega)]^T.
\end{equation}
Next, we define the following matrix valued covariance functions
\begin{equation}
  \vect{K}^{TP}(\omega,\omega') = E[\n^{TP}(\omega) (\n^{TP}(\omega'))^H]
\end{equation}
and
\begin{equation}
  \vect{K}^{DP}(\omega,\omega') = E[\n^{DP}(\omega) (\n^{DP}(\omega'))^H].
\end{equation}
By Assumption~\ref{as:noise} on the noise processes, $\vect{K}^{TP}(\omega,\omega')$ and
$\vect{K}^{DP}(\omega,\omega')$ can be expressed as
\begin{equation}
  \vect{K}^{TP}(\omega,\omega') = \bm{\Sigma}^{TP}\delta(\omega-\omega')
\end{equation}
and
\begin{equation}
  \vect{K}^{DP}(\omega,\omega') = \bm{\Sigma}^{DP}\delta(\omega-\omega').
\end{equation}
where $\bm{\Sigma}^{TP}$ and $\bm{\Sigma}^{DP}$ are diagonal matrices that do not depend on
$\omega$.

Finally, we assume that $\bm{\Sigma}^{TP}$ and $\bm{\Sigma}^{DP}$
are non-singular. Since $\bm{\Sigma}^{TP}$ and $\bm{\Sigma}^{DP}$ are both
diagonal matrices, this assumption is satisfied so long as no $n_{k,p}^{TP}(\omega)$ or $n_{k,p}^{DP}(\omega)$
is identically zero.



\section{Target Detection and Target Dipole Estimation}\label{sec:glrt}
We first set up a binary hypothesis test and next address it via generalized likelihood
ratio test (GLRT).  The solution to the GLRT leads to both the test-statistic for the detection task
as well as a method for
estimating the dipole moment of the target. 

\subsection{Target Detection}
We consider two detection problems: one where the direct-path signal is available and another
problem where
it is not.  We present a unified framework to address both problems.

For the rest of this section we assume that we have a
single point target.  For multiple targets, if we assume that scattered signal from each target can
be separated in time, i.e., the targets are sparsely distributed in space so that their respective
ranges to each receiver are far apart, then we can apply time-domain windows to the data and treat each
return separately as single point targets.

For
a point target located at
$\bm{x}^*$ and moving with velocity $\bm{v}^*$, $\tilde{\bm{q}}_{\e}(\bi x, \bi v) =
\rho^*\e_{sc}^*\delta(\bi x - \bi x^*)\delta(\bi v - \bi v^*)$ where
$\rho^*$ and $\e_{sc}^*$ are constants.
Thus,~\eqref{eq:dk2} becomes
\begin{equation}
  \bi d_k^{0,TP}(\omega) = \omega_0^2\mrm{e}^{\mrm{i}(\omega+\alpha_k^*(\bi x^*,\bi v^*)\omega_0)
    \phi_{k}(\bi x^*,\bi v^*)/c_0}A_{t}\hat{\tilde{p}}(\omega)
  \A_k^r\tilde{\bm{q}_{\e}}^*\label{eq:dkpt}
\end{equation}
where $\tilde{\bm{q}_{\e}}^* = \beta^*\rho^*\e_{sc}^*$ and $\beta^* =
\varphi(0)\ip{\r^{\perp}_t(\bi x^*)}{\e_{sc}^*}$.




We hypothesize that the point target is located at $\bi x_h$ moving with velocity $\bi v_h$ and let
$\overline{{\bm x}}_h = (\bi x_h, \bi v_h)$.
We consider the vectorized measurements
\begin{equation}
  \d^{TP} = [\bi d_1^{TP}, \bi d_2^{TP},\dots, \bi d_{M}^{TP}]^T
\end{equation}
and
\begin{equation}
  \d^{DP} = [\bi d^{DP}_1, \bi d^{DP}_2,\dots,\bi d^{DP}_{M}]^T.
\end{equation}
Now, let
\begin{equation}
  \D^{TP} = \opn{diag}[\mrm{e}^{\mrm{i}(\omega+\alpha_k(\overline{\bi x}_h))\phi_k(\overline{\bi x}_h)} I_2]\qquad k=1,\dots,M\label{eq:D_}
\end{equation}
and
\begin{equation}
  \D^{DP} = \opn{diag}[\mrm{e}^{\mrm{i}(\omega+\omega_0)\abs{\a_k^r-\a^t}/c_0} I_2],\qquad
  k=1,\dots,M\label{eq:Ddp_}
\end{equation}
where $\opn{diag}$ denotes a block diagonal matrix with $I_2$ being a
$2\times 2$ identity matrix.

Let
\begin{equation}
  \begin{aligned}
    \s^{DP} = [&A_{1,t}^{DP}(\A_{1,DP}^r\e^t)^T,A_{2,t}^{DP}(\A_{2,DP}^r\e^t)^T,\\
    &\dots,
  A_{M,t}^{DP}(\A_{M,DP}^r\e^t)^T]^T,
  \end{aligned}\label{eq:sDP_}
\end{equation}
\begin{equation}
  \bm{r}_{\e} = \omega_0^2A_t
                [(\A_1^r\tilde{\bm{q}}_{\e}^*)^T,(\A_2^r\tilde{\bm{q}}_{\e}^*)^T,\dots,(\A_{M}\tilde{\bm{q}}_{\e}^*)^T]^T.\label{eq:re_}
\end{equation}
In addition to target detection, we are interested in estimating $\bm{r}_{\e}$ up to a scaling
factor
to eventually estimate the target dipole, $\e_{sc}$.  Note that the transmitter term $A_t$ is included in
$\bm{r}_{\e}$.  However, since $A_t$ is merely a scaling term, it neither affects the solution to
the
target detection problem nor estimation of the target dipole.

Given~\eqref{eq:dp},~\eqref{eq:dkpt},\eqref{eq:D_},~\eqref{eq:Ddp_},~\eqref{eq:sDP_},
and~\eqref{eq:re_}, for each
hypothesized location and velocity pair, $\overline{\bi x}_h = ({\bi x}_h,{\bi v}_h)$, we address
the detection problem as
the following test of hypothesis:
\begin{equation}
  \begin{aligned}
    \mop{H}_1 &: \d^{TP}(\omega|\overline{\bi x}_h) = \hat{\tilde{p}}\D^{TP}\bm{r}_{\e} +
    \n^{TP}(\omega)\\
    &  \quad\d^{DP}(\omega) = \hat{\tilde{p}}\D^{DP}\s^{DP} + \n^{DP}(\omega)\\
    \mop{H}_0 &: \d(\omega|\overline{\bi x}_h) = \n^{TP}(\omega)\\
    &  \quad \d^{DP}(\omega) = \hat{\tilde{p}}\D^{DP}\s^{DP} + \n^{DP}(\omega).
  \end{aligned}\label{eq:hypothesis2}
\end{equation}
In~\eqref{eq:hypothesis2}, we assume that direct-path received signals are available.  If they are
not, we can omit $\d^{DP}(\omega)$ in the formulation or, equivalently, set them equal to zero.

Since both $\hat{\tilde{p}}$ and $\bm{r}_{\e}$ are unknown, we address the hypothesis
test in~\eqref{eq:hypothesis2} within the GLRT framework and write the test-statistic as follows:
\begin{equation}
 \lambda(\overline{\bi x}) =
  \frac{\underset{B_t,\bm{r}_{\e}}{\max}\ dP_{\mop{H}_1|\overline{\bi
        x}}}{\underset{B_t,\bm{r}_{\e}}{\max}\ dP_{\mop{H}_0|\overline{\bi x}}}\label{eq:glrt}
\end{equation}
where $P_{\mop{H}_1|\overline{\bi x}}$ and $P_{\mop{H}_0|\overline{\bi x}}$ are probability
measures for the stochastic processes under $\mop{H}_1$ and $\mop{H}_0$, respectively, given
$\overline{\bi x}$.  Thus, the GLRT is the ratio of the maximum likelihood estimates (MLEs) of $\tilde{\hat{p}}$
and $\bm{r}_{\e}$ using the
Radon-Nicodym derivatives of
$P_{\mop{H}_1|\overline{\bi x}}$ and $P_{\mop{H}_0|\overline{\bi x}}$.  Under the noise assumption
in Assumption~\ref{as:noise}, we have Gaussian pdfs for each $\omega$ under $\mop{H}_1$ and
$\mop{H}_0$.

The solution to~\eqref{eq:glrt} is given in Theorem~\ref{thm:binhyp}.
\begin{theorem}[Direct-path signal available.]\label{thm:binhyp}
  Suppose $\hat{\tilde{p}}$ is continuous in $\Omega_B$ and $\abs{\rho^*}>0$.
  Let
  \begin{equation}
    \tilde{\d} = [(\d^{TP})^T, (\d^{DP})^T]^T,
  \end{equation}
  \begin{equation}
    \tilde{\vect{D}} = \begin{bmatrix}
      \vect{D}^{TP} & \bm{0}\\
      \bm{0} & \vect{D}^{DP}
    \end{bmatrix},
  \end{equation}
  and
  \begin{equation}
    \tilde{\bm{\Sigma}} = \begin{bmatrix}
      \bm{\Sigma}^{TP} & \bm{0}\\
      \bm{0} & \bm{\Sigma}^{DP}
    \end{bmatrix}.
  \end{equation}
  Furthermore, let
  \begin{equation}
    \vect{Q}_1 = \int
      \tilde{\vect{D}}^H\tilde{\d}\tilde{\d}^H\tilde{\vect{D}}
      d\omega,\label{eq:Q1}
  \end{equation}
  and
  \begin{equation}
    \vect{Q}_0 = \int (\vect{D}^{DP})^H\d^{DP}(\d^{DP})^H\vect{D}^{DP}
      d\omega.\label{eq:Q0}
  \end{equation}
  Then, under Assumption~\ref{as:noise}, the optimal test statistics to the GLRT
  problem~\eqref{eq:glrt} is
  \begin{equation}
    \begin{aligned}
    \lambda(\overline{\bi x}) =
    \lambda_{\max}(&\tilde{\bm{\Sigma}}^{-1/2}\vect{Q}_1\tilde{\bm{\Sigma}}^{-1/2})\\
    &- \lambda_{\max}((\bm{\Sigma}^{DP})^{-1/2}\vect{Q}_0(\bm{\Sigma}^{DP})^{-1/2}).
    \end{aligned}\label{eq:lmda1}
  \end{equation}
\end{theorem}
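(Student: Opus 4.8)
The plan is to write the GLRT \eqref{eq:glrt} as a ratio of maximized Gaussian likelihoods, reduce each maximization to a Rayleigh-quotient problem, and read off the largest eigenvalue. First I would express the likelihoods through the Cameron--Martin / Radon--Nikodym derivative relative to the common zero-mean reference measure $P_0^{\mathrm{ref}}$ under which $\tilde{\d}$ is white Gaussian with covariance $\tilde{\bm{\Sigma}}\delta(\omega-\omega')$ (Assumption~\ref{as:noise}). For a hypothesized mean $\bm{\mu}(\omega)$ this gives
\[
\log\frac{dP_{\bm{\mu}}}{dP_0^{\mathrm{ref}}}=\int\Big[2\,\opn{Re}\big(\bm{\mu}^H\tilde{\bm{\Sigma}}^{-1}\tilde{\d}\big)-\bm{\mu}^H\tilde{\bm{\Sigma}}^{-1}\bm{\mu}\Big]\,d\omega .
\]
Since the reference measure is common to both hypotheses it cancels in the ratio, so $\log\lambda(\overline{\bi x})=\max_{\mop{H}_1}J_1-\max_{\mop{H}_0}J_0$, where $J_j$ is this integral evaluated at the mean under $\mop{H}_j$. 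The key structural observation is that under $\mop{H}_1$ the combined mean is rank-one, $\bm{\mu}_1(\omega)=\hat{\tilde{p}}(\omega)\,\tilde{\vect{D}}(\omega)\,\bm{\xi}$, where $\bm{\xi}=[\bm{r}_{\e}^T,(\s^{DP})^T]^T$ collects all unknown complex signature entries and the single scalar function $\hat{\tilde{p}}$ is shared across the target- and direct-path blocks.

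Next I would maximize in two stages, exploiting that $\tilde{\vect{D}}$ and $\tilde{\bm{\Sigma}}$ are both (block-)diagonal, hence commute, with $\tilde{\vect{D}}^H\tilde{\vect{D}}=I$. For a fixed signature $\bm{\xi}$, the integrand of $J_1$ is frequency-by-frequency a concave quadratic in the scalar $\hat{\tilde{p}}(\omega)$; writing $\bm{u}=\tilde{\bm{\Sigma}}^{-1/2}\bm{\xi}$ and $g(\omega)=\bm{u}^H\tilde{\vect{D}}^H\tilde{\bm{\Sigma}}^{-1/2}\tilde{\d}$, the pointwise maximizer is $\hat{\tilde{p}}(\omega)=g(\omega)/\norm{\bm{u}}^2$ and the maximal value is $\int\abs{g}^2 d\omega/\norm{\bm{u}}^2$. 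Substituting $g\,\overline{g}=\bm{u}^H\tilde{\vect{D}}^H\tilde{\bm{\Sigma}}^{-1/2}\tilde{\d}\tilde{\d}^H\tilde{\bm{\Sigma}}^{-1/2}\tilde{\vect{D}}\bm{u}$ and using commutativity turns this into the Rayleigh quotient
\[
\max_{\hat{\tilde{p}}}J_1=\frac{\bm{u}^H\big(\tilde{\bm{\Sigma}}^{-1/2}\vect{Q}_1\tilde{\bm{\Sigma}}^{-1/2}\big)\bm{u}}{\bm{u}^H\bm{u}},
\]
with $\vect{Q}_1$ as in \eqref{eq:Q1}. Maximizing over the free vector $\bm{u}$ yields $\lambda_{\max}(\tilde{\bm{\Sigma}}^{-1/2}\vect{Q}_1\tilde{\bm{\Sigma}}^{-1/2})$, attained at the dominant eigenvector (which simultaneously furnishes the dipole estimate $\bm{r}_{\e}$ up to scale).

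For the denominator I would repeat the identical argument under $\mop{H}_0$. There the target-path block of the mean vanishes, so only $\d^{DP}$, $\bm{\Sigma}^{DP}$, and $\D^{DP}$ enter; the same two-stage maximization gives $\max_{\mop{H}_0}J_0=\lambda_{\max}((\bm{\Sigma}^{DP})^{-1/2}\vect{Q}_0(\bm{\Sigma}^{DP})^{-1/2})$ with $\vect{Q}_0$ as in \eqref{eq:Q0}. The target-path likelihood under $\mop{H}_0$ is absorbed into $P_0^{\mathrm{ref}}$ and so contributes no residual term. Taking the difference yields \eqref{eq:lmda1}, which is a monotone (logarithmic) transform of the GLRT and hence an equivalent test statistic.

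The main obstacle is the rigorous treatment of the continuous-frequency white-noise likelihood: the ``density'' must be read as the Radon--Nikodym derivative above, and one must verify that the hypothesized means lie in the Cameron--Martin space so that $P_{\bm{\mu}}\ll P_0^{\mathrm{ref}}$. This is precisely where the hypotheses enter --- continuity of $\hat{\tilde{p}}$ on the compact band $\Omega_B$ guarantees the finite-energy (integrability) condition and keeps the matched-filter maximizer within the admissible class, while $\abs{\rho^*}>0$ ensures $\bm{r}_{\e}$, hence $\bm{u}$, is nonzero so the Rayleigh quotient is well defined and the maximization non-degenerate. The remaining steps --- the commutation of the diagonal matrices and the algebra identifying $\vect{Q}_1$ and $\vect{Q}_0$ --- are routine.
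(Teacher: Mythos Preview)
Your proposal is correct and follows essentially the same two-stage strategy as the paper: maximize the Gaussian log-likelihood first over the waveform $\hat{\tilde{p}}(\omega)$ (the paper does this via a G\^{a}teaux derivative in Lemma~\ref{lemma:p}, you via a pointwise concave quadratic in a scalar), then over the signature vector, reducing the latter to an eigenvalue problem. Your presentation is slightly more direct---you recognize the Rayleigh quotient immediately after the substitution $\bm{u}=\tilde{\bm{\Sigma}}^{-1/2}\bm{\xi}$, whereas the paper computes $\nabla_{\tilde{\s}}J$ and arrives at the equivalent generalized eigenvalue problem~\eqref{eq:reEig}---and you are more explicit about the Radon--Nikodym/Cameron--Martin formulation of the continuous-frequency likelihood, but the underlying argument is the same.
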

\begin{proof}
  See Appendix~\ref{apdx:binhyp}.
\end{proof}

If the direct-path signal is unavailable, e.g. because it is not collected,
we set $\d^{DP} \equiv \bm{0}$ in the binary hypothesis
problem~\eqref{eq:hypothesis2}.  The solution to this modified problem is given by the
  following corollary to Theorem~\ref{thm:binhyp}.
\begin{corollary}[Direct-path signal is not available]\label{cor:binhyp}
  By letting $\d^{DP}\equiv\bm{0}$, the test statistic~\eqref{eq:lmda1} becomes
  \begin{equation}
    \lambda(\overline{\bi x}) =
    \lambda_{\max}\left((\bm{\Sigma}^{TP})^{-1/2}\R(\bm{\Sigma}^{TP})^{-1/2}\right).\label{eq:lmda2}
  \end{equation}
  where
  \begin{equation}
    \R = \int (\D^{TP})^H\d^{TP}(\d^{TP})^H\D^{TP} d\omega.
  \end{equation}
\end{corollary}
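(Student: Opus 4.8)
The plan is to substitute $\d^{DP}\equiv\bm{0}$ directly into the quantities appearing in Theorem~\ref{thm:binhyp} and track how the block structure collapses. First I would observe that setting $\d^{DP}=\bm{0}$ in~\eqref{eq:Q0} immediately gives $\vect{Q}_0 = \bm{0}$, so that the second term in~\eqref{eq:lmda1}, namely $\lambda_{\max}((\bm{\Sigma}^{DP})^{-1/2}\vect{Q}_0(\bm{\Sigma}^{DP})^{-1/2})$, reduces to $\lambda_{\max}(\bm{0}) = 0$ and drops out of the test statistic entirely.

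Next I would expand $\vect{Q}_1$ using the block-diagonal form of $\tilde{\vect{D}}$ and the stacked vector $\tilde{\d} = [(\d^{TP})^T,\bm{0}]^T$. Multiplying out, $\tilde{\vect{D}}^H\tilde{\d} = [((\vect{D}^{TP})^H\d^{TP})^T,\bm{0}]^T$, so the outer product $\tilde{\vect{D}}^H\tilde{\d}\tilde{\d}^H\tilde{\vect{D}}$ is block diagonal with $(\vect{D}^{TP})^H\d^{TP}(\d^{TP})^H\vect{D}^{TP}$ in the top-left block and zeros elsewhere. Integrating over $\omega$ then yields $\vect{Q}_1 = \opn{diag}[\R,\bm{0}]$ with $\R$ exactly as defined in the corollary. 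Conjugating by the block-diagonal $\tilde{\bm{\Sigma}}^{-1/2} = \opn{diag}[(\bm{\Sigma}^{TP})^{-1/2},(\bm{\Sigma}^{DP})^{-1/2}]$ preserves this structure, giving $\tilde{\bm{\Sigma}}^{-1/2}\vect{Q}_1\tilde{\bm{\Sigma}}^{-1/2} = \opn{diag}[(\bm{\Sigma}^{TP})^{-1/2}\R(\bm{\Sigma}^{TP})^{-1/2},\bm{0}]$.

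The final step is to read off the largest eigenvalue of this block-diagonal matrix, whose spectrum is the union of the spectrum of the top-left block and the zero eigenvalues contributed by the trailing zero block. The only point requiring care --- and the main, if minor, obstacle --- is ruling out the possibility that those spurious zeros dominate. Here I would note that $\R$ is an integral of the rank-one positive semidefinite matrices $(\vect{D}^{TP})^H\d^{TP}(\d^{TP})^H\vect{D}^{TP}$ and is therefore positive semidefinite; since $(\bm{\Sigma}^{TP})^{-1/2}$ is Hermitian and nonsingular (by the non-singularity assumption on $\bm{\Sigma}^{TP}$), the congruence $(\bm{\Sigma}^{TP})^{-1/2}\R(\bm{\Sigma}^{TP})^{-1/2}$ is likewise positive semidefinite, so its largest eigenvalue is at least zero. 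Consequently the overall maximum is attained in the top-left block, and $\lambda_{\max}(\tilde{\bm{\Sigma}}^{-1/2}\vect{Q}_1\tilde{\bm{\Sigma}}^{-1/2}) = \lambda_{\max}((\bm{\Sigma}^{TP})^{-1/2}\R(\bm{\Sigma}^{TP})^{-1/2})$. Combining this with the vanishing of the second term recovers~\eqref{eq:lmda2} exactly.
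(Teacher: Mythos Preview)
Your argument is correct, but it follows a different route from the paper. The paper interprets ``$\d^{DP}\equiv\bm{0}$'' as removing the direct-path channel entirely, so it sets $\bm{\Sigma}^{DP}=\bm{0}$ as well; this makes $\tilde{\bm{\Sigma}}$ singular and forces the substitution of $\tilde{\bm{\Sigma}}^{-1}$ by the Moore--Penrose pseudoinverse $\tilde{\bm{\Sigma}}^{+}$, after which the block structure of $(\tilde{\bm{\Sigma}}^{+})^{1/2}\vect{Q}_1(\tilde{\bm{\Sigma}}^{+})^{1/2}$ is read off. You instead leave $\bm{\Sigma}^{DP}$ intact (hence $\tilde{\bm{\Sigma}}$ remains invertible, consistent with the standing non-singularity assumption) and observe that the vanishing of $\d^{DP}$ alone already forces the lower-right block of $\vect{Q}_1$ to be zero, so conjugation by $(\bm{\Sigma}^{DP})^{-1/2}$ is harmless. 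Your approach is more elementary---no pseudoinverse machinery is needed---and is arguably more faithful to the literal hypothesis of the corollary, which only zeroes the data. You also make explicit the positive-semidefiniteness argument ensuring the trailing zero block cannot dominate the maximum eigenvalue, a point the paper leaves implicit.
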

\begin{proof}
  See Appendix~\ref{apdx:cor}.
\end{proof}

\subsection{Estimation of Dipole Direction}
Let $\w$ be the eigenvector of $\tilde{\bm{\Sigma}}^{-1/2}\vect{Q}_1\tilde{\bm{\Sigma}}^{-1/2}$ corresponding to
its largest eigenvalue.   By~\eqref{eq:reEig}, we see that $\w$ is the estimate of
$\tilde{\bm{\Sigma}}^{-1/2}\tilde{\s}$ up to a scalar.  Now, since
$\tilde{\bm{\Sigma}}$ is block diagonal,
\begin{equation}
  \tilde{\bm{\Sigma}}^{-1/2} = \begin{bmatrix}
    \bm{\Sigma}^{-1/2} & \bm{0}\\
    \bm{0} & (\bm{\Sigma}^{DP})^{-1/2}
  \end{bmatrix}.\label{eq:Sig-12}
\end{equation}
Then, by~\eqref{eq:Sig-12} and~\eqref{eq:stilde}, $\w$ can be written as
\begin{equation}
  \w = [\w_1^T, \w_2^T]^T.
\end{equation}
where $\w_1$ is the estimate of $\bm{\Sigma}^{-1/2}\bm{r}_{\e}$ up to a scalar.
Hence, by the definition of
$\bm{r}_{\e}$ in~\eqref{eq:re_}, it follows that $\e_{sc}$ can be approximated by unit vector in
the direction of
$\A_r^{+}\bm{\Sigma}^{1/2}\w_1$ where $\A_r^{+}$ is the Moore-Penrose pseudoinverse
of
\begin{equation}
  \A_r = [(\A_1^r)^T,(\A_2^r)^T,\dots,(\A_{M}^r)^T]^T.
\end{equation}

When no direct-path signal is available, by Corollary~\ref{cor:binhyp}, we can restrict our
attention to eigenvector corresponding to the maximum eigenvalue of
$(\bm{\Sigma}^{TP})^{-1/2}\R(\bm{\Sigma}^{TP})^{-1/2}$.  Let $\u$ be that eigenvector.  Then, by similar
reasoning as above, $\u$ is the estimate of $\bm{\Sigma}^{-1/2}\bm{r}_{\e}$.  Thus, we can again
approximate $\e_{sc}$ by the unit vector in the direction of $\A_r^{+}(\bm{\Sigma}^{TP})^{1/2}\u$.

\section{Numerical Simulations}\label{sec:simulations}
We evaluate the performance of the GLRT-based spatially resolved detection and
dipole
moment estimation using simulated data.
Specifically, we compare the performance between the set up with and without polarimetric
diversity.  The effect of polarimetric diversity depends on the spatial location of the
antennas.  We further evaluate the effect of spatial diversity on the detection performance with
and without polarimetric diversity. We also compare the performance due to polarimetric
diversity with and without direct-path signal.

\subsection{The Configuration for Simulations}
The configuration we use in all simulation is as follows:  A
single
transmitter is located at $(15,15,5)$ km from the scene center transmitting an $8$ MHz bandwidth
signal.
The scene is a $400\times 400$ m flat ground topography.
All receivers lie on a circle of radius $10$ km from the scene's center on the ground plane, all at $5$ km above
the scene.

The transmitted waveforms are linearly polarized in a single direction with center frequency of $2$
GHz.  The dipole moment of the
transmit antenna is parallel to the ground plane in all cases.  Two linearly polarized dipole
antennas, labeled $H$ and $V$ are
placed at each receiver.
The dipole moment of each receiver's $H$-polarized antenna is parallel to the ground and points in
the direction $(\sin \theta_n, -\cos\theta_n,0)$, where $\theta_n$ is the azimuth angle of the
$n$-th receiver location relative to the $x$-axis.  For the $V$-polarized receive antennas, the
dipole moment directions all point upwards in the $z$-axis direction, i.e. $(0,0,1)$.
Only the $H$-polarized receive antennas are used for the case in which there is no polarization
diversity present in the receivers.
Fig.~\ref{fig:configsim} illustrates the simulation setup.
\begin{figure}
  \centering
  \includegraphics[width=0.45\textwidth]{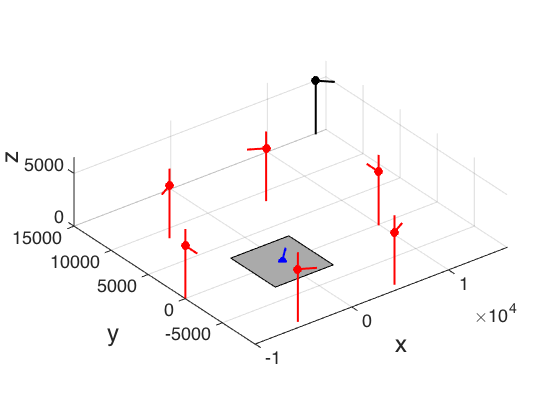}
  \caption{An illustration of the configuration of the antennas and the scene.  The transmitter is
    located at position $(15,15,5)$ km.  The receivers are positioned around a circle of radius
    $10$ km.}\label{fig:configsim}
\end{figure}

Fig.~\ref{fig:3targ_img} shows an example test-statistic image for $3$ point targets.  The plot
also shows the dipole moment directions of each target (solid blue line at each target) and its
estimated dipole moment (yellow dashed line at each target).  Average target-path signal SNR and
average direct-path signal SNR were both kept at $0$ dB, and $6$ receivers were used with $256$
samples.
\begin{figure}[!ht]
  \centering
  \includegraphics[width=0.475\textwidth]{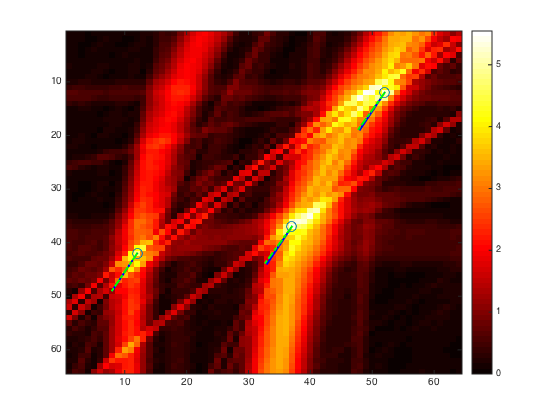}
  \caption{Example test-statistic image for $3$ point targets.  The target locations are indicated
    by the circle.  The true dipole moment direction
    is shown in blue solid line and estimated dipole moment direction is shown in yellow dashed
    line.}\label{fig:3targ_img}
\end{figure}
\subsection{Description of the Experiements}
The numerical simulations presented in this section is categorized into two cases:
\begin{enumerate}
\item The effect of polarization diversity on target detection task.
\item The effect of polarization diversity on target dipole moment estimation.
\end{enumerate}
We study the detection task via probability of detection as our metric.  We present three subsets of
simulated experiments for the detection task.
\begin{enumerate}
\item  No spatial diversity: Single receiver case with direct-path signals.
\item  Limited spatial diversity: Two receiver case with and without direct-path signals.
\item  Full spatial diversity: Six receivers surrounding the scene.
\end{enumerate}
In first two scenarios, we limit the spatial diversity to focus on the effect of polarization
diversity.  The first scenario presents a case with single receiver in which we compare case with
polarization diversity and without where direct-path signal is available in both cases.  In order
to compare with the case without direct-path signal, in the second scenario we place two receivers
about the scene.  Lastly, we examine the case with full spatial diversity and evaluate
the detection performance therein. In the first two scenarios, objective is to show polarization
diversity can improve detection performance when spatial diversity is limited.  The third scenario,
 we aim to show that
polarization diversity can help to fully exploit spatial diversity.

\subsection{Target Detection Performance}
To evaluate the target detection performance using the GLRT-based approach stated in
Theorem~\ref{thm:binhyp}, we numerically estimate the probability of detection $P_d$ under constant
false alarm rate (CFAR) of $0.001$.  The threshold values that achieve the CFAR are determined
numerically using $10,000$ realizations of noise under the null hypothesis, $\mathcal{H}_0$.  The
$\overline{\bm x}$ is held constant for each experiment.  The noise processes added to each
received signal are all assumed to have common variances of $(\sigma^{TP})^2$ for target-path signals and
$(\sigma^{DP})^2$ for direct-path signals.  In other words, we set $\bm{\Sigma}^{TP} = (\sigma^{TP})^2I$ and
$\bm{\Sigma}^{DP} = (\sigma^{DP})^2 I$ where $I$ is the identity matrix.  The variances are
determined by the average signal SNRs.

\subsubsection{Single receiver with direct-path signal}
We begin our investigation with the simplest scenario of a single receiver with direct-path signal
available.  In this scenario, we only compare cases with direct-path signal between polarization
diversity and no polarization diversity.  We look at two cases, with high direct-path signal SNR
(10 dB)
and with low direct-path signal SNR (-30 dB). The receiver was placed so that $H$-polarized and
$V$-polarized target-path signals had roughly equivalent SNRs.  The simulation results are shown in
Fig.~\ref{fig:pds_v_SNR_1rec}.
\begin{figure}[!ht]
  \centering
  \begin{subfigure}[b]{0.45\textwidth}
    \includegraphics[width=\textwidth]{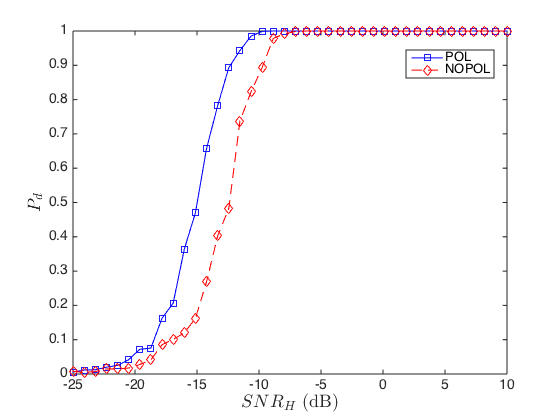}
    \caption{Direct-path SNR : 10 dB}\label{fig:pds_v_SNR_1rec_a}
  \end{subfigure}
  \begin{subfigure}[b]{0.45\textwidth}
    \includegraphics[width=\textwidth]{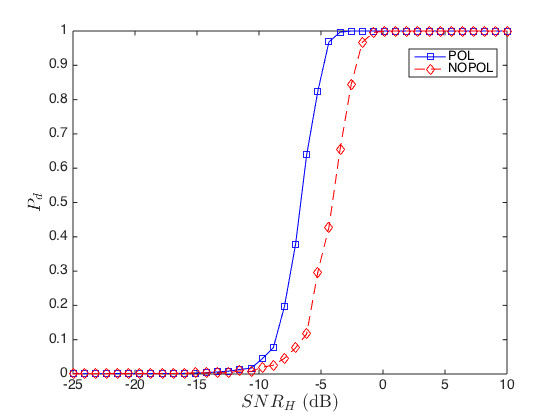}
    \caption{Direct-path SNR : -30 dB}\label{fig:pds_v_SNR_1rec_b}
  \end{subfigure}
  \caption{The probability of detection ($P_d$) vs. target-path signal SNR ($SNR_{H}$) for a single
    receiver scenario.  Blue line labeled POL is with polarization diversity.  The red dashed line
    labeled NOPOL is without polarization diversity (with only $H$-polarized signal being used for
    detection).
    $P_d$ was estimated using $10,000$ realizations of noise under
    CFAR of $0.001$.  In both cases, at fixed $P_d=0.9$, there is an improvement of $2.63$ dB in
    signal SNR.}\label{fig:pds_v_SNR_1rec}
\end{figure}

In Fig.~\ref{fig:pds_v_SNR_1rec}, we see improvement in detection performance for polarimetrically
diverse case over no polarimetric diversity, with both high direct-path SNR and low direct-path
SNR.  The improvement is similar in both cases.  The high direct-path SNR primarily functions to
shift the probability of detection graph.  It also changes the slope of the probability of
detection graph so that the degradation in probability of detection as function of target-path
signal SNR is more gradual.  Using fixed $P_d=0.9$ as a figure of merit, we see
a $2.63$ dB improvement in received signal SNR.  If we fix $P_d=0.5$ for case without polarimetric
diversity, we see that there is improvement of $0.4$ in probability of detection with polarimetric
diversity for high direct-path SNR case and improvement of $0.5$ in probability of detection for
low direct-path SNR case.

\subsubsection{Two receiver with and without direct-path signal}
In this subsection, we present simulations with limited spatial diversity, with two receivers.  By
having two receivers, we can in edition compare scenarios between case with direct-path signals
available and case without.  We again look at case with high direct-path SNR ($10$ dB) and low
direct-path SNR ($-30$ dB).  Fig.~\ref{fig:pds_v_SNR_2rec} shows the result.
\begin{figure}[!ht]
  \centering
  \begin{subfigure}[b]{0.45\textwidth}
    \includegraphics[width=\textwidth]{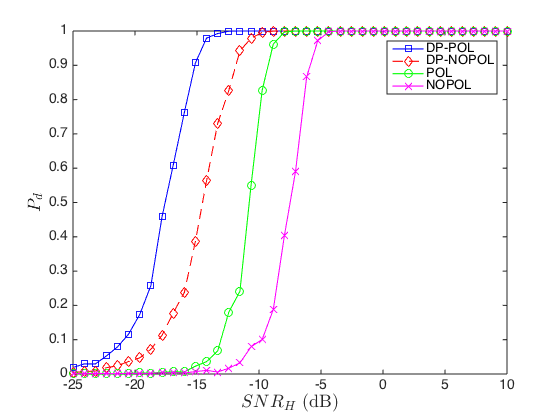}
    \caption{Average direct-path signal SNR : 10 dB}
  \end{subfigure}
  \begin{subfigure}[b]{0.45\textwidth}
    \includegraphics[width=\textwidth]{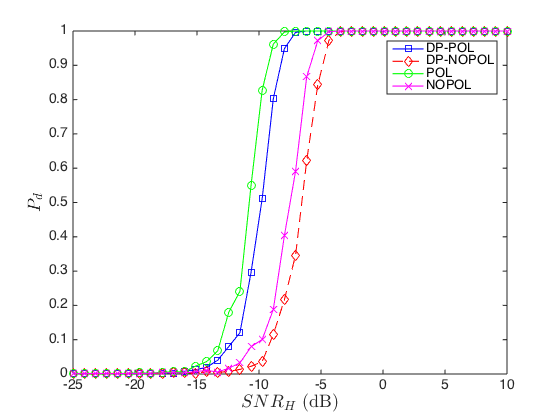}
    \caption{Average direct-path signal SNR : -30 dB}
  \end{subfigure}
  \caption{The probability of detection ($P_d$) vs. average target-path signal SNR ($SNR_{AVG}$)
    for a two receiver scenario.  Blue line labeled DP-POL is with polarization diversity and
    direct-path signals available.  The red dashed line
    labeled DP-NOPOL is without polarization diversity (with only $H$-polarized signal being used for
    detection) and direct-path signals available.  Green line labeled POL is with polarization
    diversity and without direct-path signals available.  Magenta line labeled NOPOL is without
    polarization diversity and without direct-path signals available.
    $P_d$ was estimated using $10,000$ realizations of noise under
    CFAR of $0.001$. At fixed $P_d=0.9$, we see $3.31$ dB improvement for high average direct-path
    SNR in
    average target-path signal
    SNRs with polarization diversity over without.  With low average direct-path SNR, the
    improvement is $3.38$ dB.  Without direct-path, the improvement is $3.39$ dB.}\label{fig:pds_v_SNR_2rec}
\end{figure}

We see improvement in detection performance for polarimetrically
diverse case over no polarimetric diversity, with both high direct-path SNR and low direct-path
SNR.  The improvement is again similar in both cases.  For fixed $P_d=0.9$ as a figure of merit, we see
a $3.31$ and $3.38$ dB improvement in average received signal SNR for case with high average direct-path SNR and
low average direct-path SNR, respectively.  Without direct-path, the improvement is $3.39$ dB.  If we fix $P_d=0.5$ for case without polarimetric
diversity, we see that there is improvement of about $0.5$ in probability of detection with polarimetric
diversity for both high direct-path SNR, low direct-path SNR, and without direct-path signal cases.
Furthermore, we see that with low direct-path SNR, the detection performance degrades to be
comparable to, in fact, a bit worse than that of case without direct-path signals available.

\subsubsection{Full spatial diversity: $6$ receivers}
The final scenario we examined was with full spatial diversity using $6$ receivers equally spaced
on the circle around the scene.  For this experiment, we show results for $3$ different average direct-path
signal SNRs.  Fig.~\ref{fig:pds_v_SNR_6rec} shows the results of the probability
of detection the $3$ different average direct-path SNR cases.  Comparing
Fig.~\ref{fig:pds_v_SNR_6rec_a} and Fig.\ref{fig:pds_v_SNR_6rec}, we see that the improvement in the
probability of detection graph is minimal with growing high direct-path SNRs.
\begin{figure*}
  \centering
    \begin{subfigure}[b]{0.45\textwidth}
    \includegraphics[width=\textwidth]{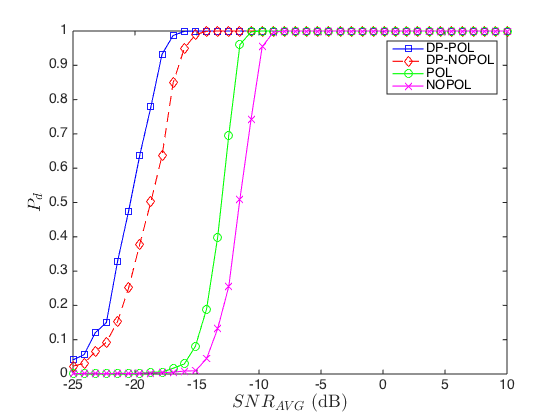}
    \caption{Average direct-path SNR : $10$ dB}\label{fig:pds_v_SNR_6rec_a}
  \end{subfigure}
  \begin{subfigure}[b]{0.45\textwidth}
    \includegraphics[width=\textwidth]{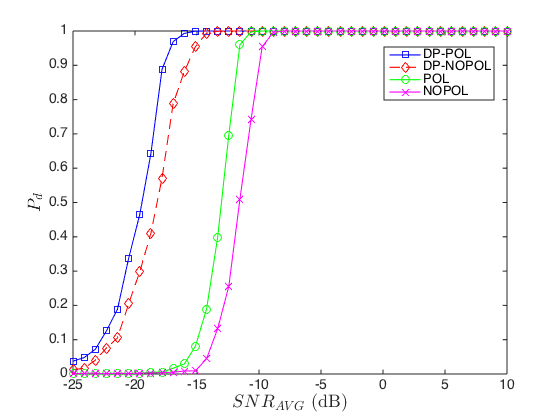}
    \caption{Average direct-path SNR : $0$ dB}\label{fig:pds_v_SNR_6rec_b}
  \end{subfigure}
  \begin{subfigure}[b]{0.45\textwidth}
    \includegraphics[width=\textwidth]{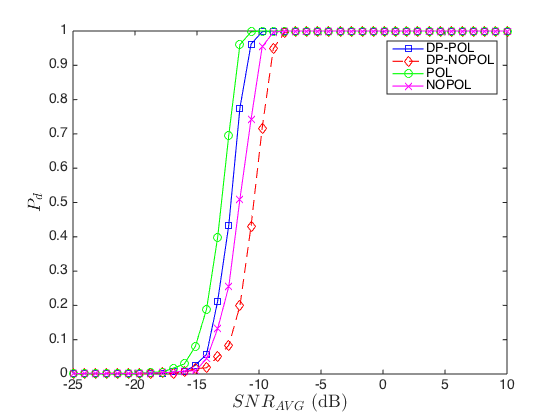}
    \caption{Average direct-path SNR : $-30$ dB}\label{fig:pds_v_SNR_6rec_c}
  \end{subfigure}
  \caption{The probability of detection ($P_d$) vs. average target-path signal SNR ($SNR_{AVG}$)
    for a two receiver scenario.  Blue line labeled DP-POL is with polarization diversity and
    direct-path signals available.  The red dashed line
    labeled DP-NOPOL is without polarization diversity (with only $H$-polarized signal being used for
    detection) and direct-path signals available.  Green line labeled POL is with polarization
    diversity and without direct-path signals available.  Magenta line labeled NOPOL is without
    polarization diversity and without direct-path signals available.
    $P_d$ was estimated using $10,000$ realizations of noise under
    CFAR of $0.001$.  At fixed $P_d=0.9$, we see $1.54$ dB improvement for high average direct-path
    SNR in
    average target-path signal
    SNRs with polarization diversity over without.  With low average direct-path SNR, the
    improvement is $1.90$ dB.  For the middle case, the improvement is $1.89$ dB.  Without direct-path, the improvement is $1.77$ dB.}\label{fig:pds_v_SNR_6rec}
\end{figure*}

For fixed $P_d=0.9$ as a figure of merit, we see
a $1.59$ dB improvement in average received signal SNR is observed for high average direct-path SNR ($10$
dB), $1.89$ dB improvement with average direct-path SNR at $0$ dB and $1.90$ dB for low average
direct-path SNR ($-30$ dB).  Without direct-path, the improvement is $1.77$ dB.  Fixing $P_d=0.5$ for case without polarimetric
diversity, we see that there is improvement of about $0.27$ in probability of detection with polarimetric
diversity for $10$ dB average direct-path SNR, $0.31$ for $0$ dB average direct-path SNR, and $0.5$
for $-30$ dB average direct-path SNR.  For case without direct-path the improvement in $P_d$ is
$0.45$.

\subsection{Numerical Simulation of Target Dipole Moment Estimation}
We also examine the performance of dipole moment estimation. We
again use
$6$ receivers equally spaced on a circle about the center of the scene.  The performance criteria
used is the the angle between the true dipole moment and the estimated
one.  Specifically, we use $1,000$ realizations of noise and use the sample average of the
estimated angles,
labeled $\Delta\phi$, in radians, between the
true dipole moment and the estimated one as the final performance
criteria. We repeated this process for various target-path noise levels.  Fig.~\ref{fig:dipoleest} shows
the resulting graph of $\Delta\phi$ vs. average target-path signal SNRs ($SNR_{AVG}$).  Note that non-diverse cases completely fails
to accurately estimate the dipole moment.  Namely, $\Delta\phi$ is close to $\pi/2$ for all
$SNR_{AVG}$
values.  This is due to the fact that for the non-diverse case, estimated dipole moment all have a
dominant component along the $z$-axis.  The $z$-axis component of the target dipole moment cannot be
estimated accurately since it lies in the orthogonal complement of the subspace spanned by the dipole moments
of the $H$-polarized receive antennas as they all lie in a plane parallel to
the ground plane.  Furthermore, in the case of dipole moment estimation, high level of direct-path
noise degrades the estimation performance to the extent that having only the target-path signal results in
better estimates. This is due to the fact that, although the direct-path signals do not contain
information about the dipole moment of the target directly, the estimation of
$\tilde{\s}$, i.e. the
eigenvector, does depend on the direct-path signals.
\begin{figure}[!ht]
  \centering
  \begin{subfigure}[b]{0.45\textwidth}
    \includegraphics[width=\textwidth]{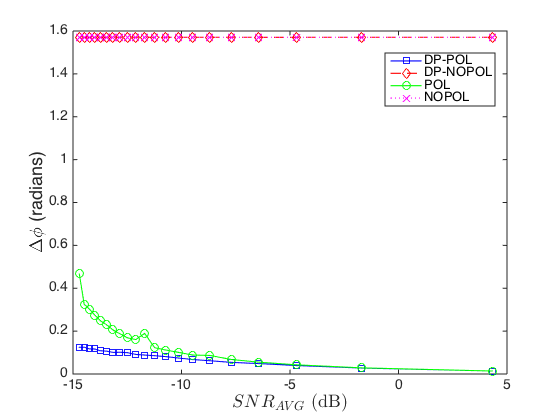}
    \caption{Average direct-path SNR : $4$ dB}
  \end{subfigure}
  \begin{subfigure}[b]{0.45\textwidth}
    \includegraphics[width=\textwidth]{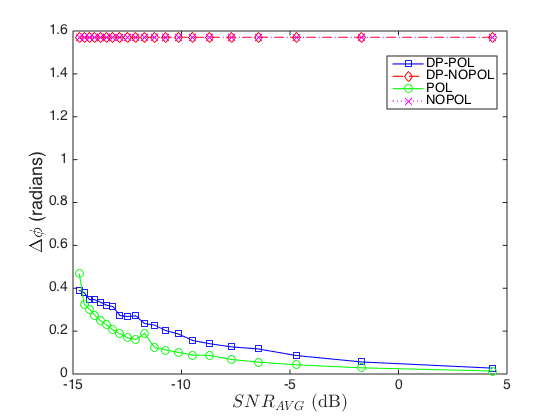}
    \caption{Average direct-path SNR : $-13$ dB}
  \end{subfigure}
  \caption{Average absolute angle between estimated dipole moment direction of the target and the
    true dipole moment ($\Delta\phi$) vs. average target-path SNR ($SNR_{AVG}$) at two different
    average direct-path SNRs.  Blue solid line and red dashed line
    (DP-POL and DP-NOPOL) are cases with
    direct-path signals while green solid line and magenta dotted line (POL and NOPOL) are case
    without direct-path signals.  DP-POL and POL are cases with polarimetric diversity and DP-NOPOL
    and NOPOL are cases without polarimetric diversity.  One
    thousand realizations of
    noise was used to estimate $\Delta\phi$. $6$ equally space receivers on a circle of radius 10km
    from the center of the scene were used for processing.}\label{fig:dipoleest}
\end{figure}



\section{Conclusions and Future Works}\label{sec:conclusions}
In this paper, we derived from first principles, a novel polarimetric data model for moving target that
takes into account anisotropic scattering.  This model represents a target as a spatially distributed collection of dipole antennas.  We considered a bistatic
scenario with a single transmitter of opportunity equipped with a dipole antenna with unknown dipole
moment direction and multiple receivers equipped with a pairs
of orthogonally polarized dipole antennas.
We address the target detection with and without direct-path signal in a unified GLRT
framework and derived a method to estimate the dipole moment of the target.  The detection
test-statistic 
is given in terms of the maximum eigenvalues of whitened data
correlation matrices.  For the case with the direct-path signal, the test-statistic is the
difference between the
maximum eigenvalues of full data correlation matrix (with both target-path and direct-path received
signals) and the correlation matrix for the direct-path signals only.  When direct-path signals are not
available, the test-statistic reduces to the maximum eigenvalue of the target-path correlation
matrix.  The dipole moment estimation was derived from the eigenvector associated with the maximum
eigenvalue of the data correlation matrix.
This estimation method falls
out naturally from the solution to a generalized eigenvalue problem from the GLRT
framework.

In addition, through series of numerical simulations, we show that polarimetric diversity helps in
both the detection and dipole estimation tasks.  Specifically, numerical simulations show that polarimetric diversity improves the probability of detection under constant false alarm
rate.  The improvement was observed both with limited spatial diversity and full spatial diversity.
For the dipole
estimation, we further show that, without diversity, full $3$-dimensional estimation of the
dipole moment is untenable due to non-trivial null space. We further showed that availability of
direct-path signal  generally improves both the detection and the estimation tasks.

In the companion paper~\cite{son17analysis}, we provide a mathematical analysis of the detection
performance. 
For further extension of this work, we may add structured noise such as clutter and examine the detection
test-statistics under such scenario.  An alternative to GLRT could also be employed for the
detection task, such as a Bayesian framework.



\bibliographystyle{IEEEtran}
\bibliography{IEEEabrv,citations}
\appendix
\subsection{Proof of Theorem~\ref{thm:binhyp}}\label{apdx:binhyp}
Define the inner product
\begin{equation}
  \ip{\mathbf{f}}{\mathbf{g}}_{\vect{K}} = \int \mathbf{g}^H\K^{-1}\mathbf{f} d\omega d\omega' = \int
  \vect{g}^H\bm{\Sigma}^{-1}\vect{f} d\omega
\end{equation}
and the associated norm as
\begin{equation}
  \norm{\mathbf{f}}_{\vect{K}} = \sqrt{\ip{\mathbf{f}}{\mathbf{f}}_{\vect{K}}} = \sqrt{\int\vect{f}^H\bm{\Sigma}^{-1}\vect{f} d\omega}.
\end{equation}

Under the Gaussian noise assumptions in Section~\ref{sec:noise_model},  we have the following
log-likelihood function under $\mop{H}_1$
\begin{equation}
  \begin{aligned}
    \ell_{\mop{H}_1}(\hat{\tilde{p}},\s^{DP},\bm{r}_{\e}) =  -\frac{1}{2} &\left(\norm{\vect{d} -
        \hat{\tilde{p}}\vect{D}\tilde{\bm{r}}_{\e}}_{\vect{K}^{TP}}^2 \right. \\
      & \left. + \norm{\d^{DP} -
        \hat{\tilde{p}}\D^{DP}\s^{DP}}_{\vect{K}^{DP}}^2\right),
    \end{aligned}\label{eq:glrtdpH1}
\end{equation}
and under $\mop{H}_0$
\begin{equation}
    \ell_{\mop{H}_0}(\hat{\tilde{p}},\s^{DP}) =  - \frac{1}{2}\left(\norm{\d}_{\vect{K}^{TP}}^2 + \norm{\d^{DP} -
        \hat{\tilde{p}}\D^{DP}\s^{DP}}_{\vect{K}^{DP}}^2\right).\label{eq:glrtdpH0}
\end{equation}

The~\eqref{eq:glrt} implies that we can then solve the MLE problem by maximizing~\eqref{eq:glrtdpH1} w.r.t. $(\hat{\tilde{p}},\s^{DP},\bm{r}_{\e})$
and maximizing~\eqref{eq:glrtdpH0} w.r.t. $(\hat{\tilde{p}},\s^{DP})$ and taking their difference.

Let
\begin{equation}
  \tilde{\s} = [\bm{r}_{\e}^T,(\s^{DP})^T]^T\label{eq:stilde}
\end{equation}
and
\begin{equation}
  \tilde{\vect{K}}(\omega,\omega') = \tilde{\bm{\Sigma}}\delta(\omega-\omega').
\end{equation}
Then~\eqref{eq:glrtdpH1} becomes
\begin{equation}
    \ell_{\mop{H}_1}(\hat{\tilde{p}},\tilde{\s}) =  -\frac{1}{2} \left(\norm{\tilde{\vect{d}} -
        \hat{\tilde{p}}\tilde{\vect{D}}\tilde{\s}}_{\tilde{\vect{K}}}^2\right).\label{eq:glrtdpH12}
\end{equation}

We maximize~\eqref{eq:glrtdpH12} first over $\hat{\tilde{p}}$ and find
$\hat{\tilde{p}}$ in terms
of $\tilde{\s}$, then
maximize it over $\tilde{\s}$.

By maximizing over $\hat{\tilde{p}}$, we arrive at a solution given by the following Lemma.
\begin{lemma}\label{lemma:p}
  Given~\eqref{eq:glrtdpH12},
  \begin{equation}
    \hat{\tilde{p}}^* = \underset{{\hat{\tilde{p}}}}{\opn{argmax}}\ \ell_{\mop{H}_1}(\hat{\tilde{p}},\tilde{\s})
    = \frac{\tilde{\s}^H\tilde{\bm{\Sigma}}^{-1}\tilde{\vect{D}}^H\tilde{\d}}
  {\tilde{\s}^H\tilde{\bm{\Sigma}}^{-1}\tilde{\s}}.\label{eq:AtMLE}
  \end{equation}
\end{lemma}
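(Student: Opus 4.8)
The plan is to first expand the weighted norm in~\eqref{eq:glrtdpH12} into an explicit frequency integral and then exploit the fact that the unknown $\hat{\tilde{p}}$ enters only as a multiplicative complex scalar at each frequency. By the definition of the norm in Section~\ref{apdx:binhyp}, we have $\norm{\tilde{\vect{d}} - \hat{\tilde{p}}\tilde{\vect{D}}\tilde{\s}}_{\tilde{\vect{K}}}^2 = \int (\tilde{\d} - \hat{\tilde{p}}\tilde{\vect{D}}\tilde{\s})^H \tilde{\bm{\Sigma}}^{-1} (\tilde{\d} - \hat{\tilde{p}}\tilde{\vect{D}}\tilde{\s})\, d\omega$. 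I would observe that, since $\hat{\tilde{p}}(\omega)$ at distinct frequencies are coupled only through the continuity requirement, maximizing $\ell_{\mop{H}_1}$ over the function $\hat{\tilde{p}}$ is equivalent to minimizing the (nonnegative) integrand separately for each fixed $\omega$. This reduces the task to a one-dimensional complex weighted least-squares problem at each $\omega$.

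For fixed $\omega$, write $\vect{a} = \tilde{\vect{D}}\tilde{\s}$ and treat $\hat{\tilde{p}}$ as a complex scalar. The integrand expands to the Hermitian quadratic $\tilde{\d}^H\tilde{\bm{\Sigma}}^{-1}\tilde{\d} - \hat{\tilde{p}}\,\tilde{\d}^H\tilde{\bm{\Sigma}}^{-1}\vect{a} - \overline{\hat{\tilde{p}}}\,\vect{a}^H\tilde{\bm{\Sigma}}^{-1}\tilde{\d} + \abs{\hat{\tilde{p}}}^2\,\vect{a}^H\tilde{\bm{\Sigma}}^{-1}\vect{a}$, which is strictly convex in $\hat{\tilde{p}}$ provided $\vect{a}^H\tilde{\bm{\Sigma}}^{-1}\vect{a}>0$. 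Differentiating with respect to $\overline{\hat{\tilde{p}}}$ in the Wirtinger sense (equivalently, completing the square) and setting the result to zero gives the unique minimizer $\hat{\tilde{p}}^* = (\vect{a}^H\tilde{\bm{\Sigma}}^{-1}\tilde{\d})/(\vect{a}^H\tilde{\bm{\Sigma}}^{-1}\vect{a})$. The denominator is positive because $\tilde{\bm{\Sigma}}^{-1}$ is positive definite (its diagonal entries are the positive noise variances, and $\tilde{\bm{\Sigma}}$ is nonsingular by assumption) and $\tilde{\vect{D}}$ is invertible, so $\vect{a}\neq\vect{0}$ whenever $\tilde{\s}\neq\vect{0}$.

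It then remains to rewrite this minimizer in the form~\eqref{eq:AtMLE}. The key structural fact is that $\tilde{\vect{D}}$ is block diagonal with blocks of the form $\mrm{e}^{\mrm{i}(\cdots)}I_2$ (see~\eqref{eq:D_} and~\eqref{eq:Ddp_}); each block has unit modulus, so $\tilde{\vect{D}}$ is unitary, $\tilde{\vect{D}}^H\tilde{\vect{D}} = I$, and, being diagonal, it commutes with the diagonal matrix $\tilde{\bm{\Sigma}}^{-1}$. Substituting $\vect{a} = \tilde{\vect{D}}\tilde{\s}$, the denominator collapses to $\tilde{\s}^H\tilde{\vect{D}}^H\tilde{\bm{\Sigma}}^{-1}\tilde{\vect{D}}\tilde{\s} = \tilde{\s}^H\tilde{\bm{\Sigma}}^{-1}\tilde{\vect{D}}^H\tilde{\vect{D}}\tilde{\s} = \tilde{\s}^H\tilde{\bm{\Sigma}}^{-1}\tilde{\s}$, while the numerator becomes $\tilde{\s}^H\tilde{\vect{D}}^H\tilde{\bm{\Sigma}}^{-1}\tilde{\d} = \tilde{\s}^H\tilde{\bm{\Sigma}}^{-1}\tilde{\vect{D}}^H\tilde{\d}$, yielding exactly~\eqref{eq:AtMLE}.

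The only genuine subtlety is the passage to a per-frequency optimization. I would justify it by noting that the continuity hypothesis on $\hat{\tilde{p}}$ imposes no constraint at the optimum: the pointwise minimizer $\hat{\tilde{p}}^*(\omega)$ is itself continuous in $\omega$, because the entries of $\tilde{\vect{D}}$, $\tilde{\d}$, and $\tilde{\s}$ are continuous and the denominator never vanishes, so the unconstrained pointwise solution already lies in the admissible class. Everything else is routine linear algebra, and the clean cancellation in the final step hinges entirely on $\tilde{\vect{D}}$ being unitary and commuting with $\tilde{\bm{\Sigma}}^{-1}$ — which is precisely where the block structure of $\tilde{\vect{D}}$ and the diagonality of $\tilde{\bm{\Sigma}}$ enter.
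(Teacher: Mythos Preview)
Your proof is correct, but it takes a somewhat different route from the paper's. The paper treats $\hat{\tilde{p}}$ as an element of a function space and computes the G\^{a}teaux derivative of $\ell_{\mop{H}_1}$ directly, using the identities $\tilde{\vect{D}}^H\tilde{\bm{\Sigma}}^{-1}\tilde{\vect{D}} = \tilde{\bm{\Sigma}}^{-1}$ (from diagonality and unitarity) inside the inner product; setting the resulting linear functional to zero yields~\eqref{eq:AtMLE} in one stroke. You instead exploit the separability of the integrand to reduce to a one-complex-variable weighted least-squares problem at each fixed $\omega$, solve that by Wirtinger differentiation (or completing the square), and then invoke the same commutation/unitarity facts to simplify. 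Your approach is more elementary and makes the strict convexity and uniqueness explicit, and your closing remark that the pointwise minimizer is continuous (so the continuity hypothesis on $\hat{\tilde{p}}$ is not an active constraint) is a point the paper leaves implicit. The paper's functional-derivative argument is slightly more compact and sidesteps the need to justify the per-frequency decoupling, but both hinge on exactly the same structural fact about $\tilde{\vect{D}}$ and $\tilde{\bm{\Sigma}}^{-1}$.
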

\begin{proof}
  See  Appendix~\ref{apdx:A}.
\end{proof}

Using Lemma~\ref{lemma:p}, we plug~\eqref{eq:AtMLE} into~\eqref{eq:glrtdpH12} and we arrive at
\begin{equation}
  \underset{{\hat{\tilde{p}}}}{\opn{max}}\ \ell_{\mop{H}_1}(\hat{\tilde{p}},\tilde{\s}) =
   \frac{1}{2}\left(J(\tilde{\s})-\norm{\tilde{\vect{d}}}_{\tilde{\vect{K}}}^2\right)
\end{equation}
where
\begin{equation}
  J(\tilde{\s}) = \int
    \frac{\abs{\tilde{\d}^H\tilde{\vect{D}}\tilde{\bm{\Sigma}}^{-1}\tilde{\s}}^2}
    {\tilde{\s}^H\tilde{\bm{\Sigma}}^{-1}\tilde{\s}} d\omega.\label{eq:J}
\end{equation}
In Appendix~\ref{apdx:B}, it is shown that maximizing~\eqref{eq:J} w.r.t. $\tilde{\s}$ is
equivalent to solving the following
generalized eigenvalue problem:
\begin{equation}
  \tilde{\bm{\Sigma}}^{-1}\vect{Q}_1\tilde{\bm{\Sigma}}^{-1}\tilde{\s} = J(\tilde{\s})\tilde{\bm{\Sigma}}^{-1}\tilde{\s}\label{eq:reEig}.
\end{equation}
The solution to~\eqref{eq:reEig} is given by
$\lambda_{\max}(\tilde{\bm{\Sigma}}^{-1/2}\vect{Q}_1\tilde{\bm{\Sigma}}^{-1/2})$.  Therefore,
\begin{equation}
  \underset{\hat{\tilde{p}},\tilde{\s}}{\max}\ \ell_{\mop{H}_1}(\hat{\tilde{p}},\tilde{\s}) = \frac{1}{2}
  \left(\lambda_{\max}(\tilde{\bm{\Sigma}}^{-1/2}\vect{Q}_1\tilde{\bm{\Sigma}}^{-1/2})
  - \norm{\tilde{\d}}_{\tilde{\vect{K}}}^2\right).\label{eq:l1mle}
\end{equation}
We can follow similar procedure to solve maximization of~\eqref{eq:glrtdpH0} and arrive at
\begin{equation}
  \begin{aligned}
  \underset{\hat{\tilde{p}},\s^{DP}}{\max}\ \ell_{\mop{H}_0}(\hat{\tilde{p}},\s^{DP}) = \frac{1}{2}
  &\left(\lambda_{\max}((\bm{\Sigma}^{DP})^{-1/2}\vect{Q}_0(\bm{\Sigma}^{DP})^{-1/2})\right.\\
    &\left.- \norm{\tilde{\d}}_{\tilde{\vect{K}}}^2\right).
  \end{aligned}\label{eq:l2mle}
\end{equation}
Noting that common scalar term $1/2$ is irrelevant, we arrive at our conclusion that
\begin{equation}
  \begin{aligned}
    \lambda(\overline{\bi x}) =
    \lambda_{\max}(&\tilde{\bm{\Sigma}}^{-1/2}\vect{Q}_1\tilde{\bm{\Sigma}}^{-1/2})\\ -
    &\lambda_{\max}((\bm{\Sigma}^{DP})^{-1/2}\vect{Q}_0(\bm{\Sigma}^{DP})^{-1/2}).
  \end{aligned}
\end{equation}

\subsection{Proof of Lemma~\ref{lemma:p}}\label{apdx:A}
Since~\eqref{eq:glrtdpH12} is Fr\'{e}chet differentiable we use G\^{a}teaux derivative to
differentiate~\eqref{eq:glrtdpH12} w.r.t $\hat{\tilde{p}}$.  First note that
\begin{equation}
  \underset{{\hat{\tilde{p}}}}{\opn{argmax}}\ \ell_{\mop{H}_1}(\hat{\tilde{p}},\tilde{\s}) =
  \underset{{\hat{\tilde{p}}}}{\opn{argmax}}\
  2\opn{Re}\{\ip{\tilde{\d}}{\hat{\tilde{p}}\tilde{\vect{D}}\tilde{\s}}_{\tilde{\vect{K}}}\}
    - \norm{\hat{\tilde{p}}\tilde{\vect{D}}\tilde{\s}}_{\tilde{\vect{K}}}^2
\end{equation}
By definition of G\^{a}teaux derivative we have that
\begin{align}
  D\left(\norm{\hat{\tilde{p}}\tilde{\vect{D}}\tilde{\s}}_{\tilde{\vect{K}}}^2\right)(h) &= D(\ip{\tilde{\vect{D}}\tilde{\s}\hat{\tilde{p}}}{\tilde{\vect{D}}\tilde{\s}\hat{\tilde{p}}}_{\tilde{\vect{K}}})(h)\\
  &= 2\opn{Re}\{\ip{\tilde{\vect{D}}\tilde{\s}\hat{\tilde{p}}}{\tilde{\vect{D}}\tilde{\s}h}_{\tilde{\vect{K}}}\}\\
  &= 2\opn{Re}\left\{\int \tilde{\s}^H\tilde{\vect{D}}^H\tilde{\bm{\Sigma}}^{-1}\tilde{\vect{D}}\tilde{\s}\hat{\tilde{p}}h^*
    d\omega\right\}\\
  &= 2\opn{Re}\left\{\int \tilde{\s}^H\tilde{\bm{\Sigma}}^{-1}\tilde{\s}\hat{\tilde{p}}h^*
    d\omega\right\}\\
  &= 2\opn{Re}\{\ip{\tilde{\s}\hat{\tilde{p}}}{\tilde{\s}h}_{\tilde{\vect{K}}}\}
\end{align}
where, in the last two lines, we used the fact that diagonal matrices commute and that $\tilde{\vect{D}}$ is
unitary for every $\omega$.
By similar computation, we have that
\begin{equation}
  \begin{aligned}
  D(\ip{\d}{\hat{\tilde{p}}\tilde{\vect{D}}\tilde{\s}})(h) &= \ip{\d}{h\tilde{\vect{D}}\tilde{\s}}_{\tilde{\vect{K}}} =
  \ip{\tilde{\vect{D}}^H\d}{h\tilde{\s}}_{\tilde{\vect{K}}}\\
  &= \int
  \tilde{\s}^H\tilde{\bm{\Sigma}}^{-1}\tilde{\vect{D}}^H\d h^* d\omega.
  \end{aligned}
\end{equation}
Thus, we have
\begin{equation}
  D\ell(\hat{\tilde{p}},\tilde{\s}) = \opn{Re}\{\ip{\tilde{\vect{D}}^H\d}{\tilde{\s}\cdot}_{\tilde{\vect{K}}} -
  \ip{\tilde{\s}\hat{\tilde{p}}}{\tilde{\s}\cdot}_{\tilde{\vect{K}}}\}.\label{eq:Dell}
\end{equation}
Setting~\eqref{eq:Dell} equal to zero implies the desired result.

\subsection{Derivation of~\eqref{eq:reEig}}\label{apdx:B}
Note that $\tilde{\s}$ is not a function of $\omega$.  Under the assumption that $\abs{\rho}^*>0$, the integrand of
$J(\tilde{\s})$ is Lipschitz w.r.t. $\tilde{\s}$.  
Furthermore, $\tilde{\vect{D}}$ is continuous w.r.t. $\omega$ and so is $\tilde{\d}$ given that $\hat{\tilde{p}}$ is continuous w.r.t. $\omega$.  Since the domain of integration,
$\Omega_B$, is compact, and the integrand continuous w.r.t. $\omega$, the integrand is Lipschitz w.r.t. $\omega$.
Thus, by Lebesgue Dominated Convergence
theorem we can interchange the integral and the gradient operator arriving at
\begin{equation}
  \nabla_{\tilde{\s}}J(\tilde{\s}) = \int \nabla_{\tilde{\s}} \left(\frac{\abs{\tilde{\d}^H\tilde{\vect{D}}\tilde{\bm{\Sigma}}^{-1}\tilde{\s}}^2}
    {\tilde{\s}^H\tilde{\bm{\Sigma}}^{-1}\tilde{\s}}\right) d\omega.\label{eq:nabJ1}
\end{equation}
We now use the product rule to evaluate the gradient of the integrand.  First, we have that
\begin{equation}
  \nabla_{\tilde{\s}}\abs{\tilde{\d}^H\tilde{\vect{D}}\tilde{\bm{\Sigma}}^{-1}\tilde{\s}}^2 =
  2\tilde{\bm{\Sigma}}^{-1}\tilde{\vect{D}}^H\tilde{\d}\tilde{\d}^H\tilde{\vect{D}}\tilde{\bm{\Sigma}}^{-1}\tilde{\s}.\label{eq:num}
\end{equation}
Next, we have by the chain rule,
\begin{equation}
  \nabla_{\tilde{\s}}\frac{1}{\tilde{\s}^H\tilde{\bm{\Sigma}}^{-1}\tilde{\s}} =
  \frac{-2\tilde{\bm{\Sigma}}^{-1}\tilde{\s}}{(\tilde{\s}^H\tilde{\bm{\Sigma}}^{-1}\tilde{\s})^2}.\label{eq:denom}
\end{equation}
Using~\eqref{eq:num} and~\eqref{eq:denom}, we have
\begin{equation}
  \begin{aligned}
    &\int\nabla_{\tilde{\s}}\left(\frac{\abs{\tilde{\d}^H\tilde{\vect{D}}\tilde{\bm{\Sigma}}^{-1}\tilde{\s}}^2}
      {\tilde{\s}^H\tilde{\bm{\Sigma}}^{-1}\tilde{\s}}\right) d\omega\\
    &=2\int\left(\frac{\tilde{\bm{\Sigma}}^{-1}\tilde{\vect{D}}^H\tilde{\d}\tilde{\d}^H\tilde{\vect{D}}\tilde{\bm{\Sigma}}^{-1}\tilde{\s}}{\tilde{\s}^H\tilde{\bm{\Sigma}}^{-1}\tilde{\s}}-
      \frac{\abs{\tilde{\d}^H\tilde{\vect{D}}\tilde{\bm{\Sigma}}^{-1}\tilde{\s}}^2\tilde{\bm{\Sigma}}^{-1}\tilde{\s}}{(\tilde{\s}^H\tilde{\bm{\Sigma}}^{-1}\tilde{\s})^2}\right)
    d\omega \\
    &=
    \frac{2}{\tilde{\s}^H\tilde{\bm{\Sigma}}^{-1}\tilde{\s}}\left(\tilde{\bm{\Sigma}}^{-1}\vect{Q}_1\tilde{\bm{\Sigma}}^{-1}-
        J(\tilde{\s})\tilde{\bm{\Sigma}}^{-1}\right)\tilde{\s}.\label{eq:nabJ2}
  \end{aligned}
\end{equation}
Plugging in~\eqref{eq:nabJ2} into~\eqref{eq:nabJ1} and setting $\nabla_{\tilde{\s}}J = 0$ we
arrive at the generalized eigenvalue problem in~\eqref{eq:reEig}.

\subsection{Proof of Corollary~\ref{cor:binhyp}}\label{apdx:cor}
By setting $\d^{DP}\equiv\bm{0}$, we
have that $\bm{\Sigma}^{DP} = \bm{0}$.  This leads to
\begin{equation}
  \tilde{\bm{\Sigma}} = \begin{bmatrix}
    \bm{\Sigma}^{TP} & \bm{0}\\
    \bm{0} &  \bm{0}
  \end{bmatrix}.
\end{equation}
Note that $\tilde{\bm{\Sigma}}$ is no longer invertible.  We substitute
$\tilde{\bm{\Sigma}}^{-1}$ by its Moore-Penrose pseudoinverse,
\begin{equation}
  \tilde{\bm{\Sigma}}^{+} = \begin{bmatrix}
    (\bm{\Sigma}^{TP})^{-1} & \bm{0}\\
    \bm{0} & \bm{0}
  \end{bmatrix}.
\end{equation}
This leads to replacing $\tilde{\bm{\Sigma}}^{-1/2}\vect{Q}_1\tilde{\bm{\Sigma}}^{-1/2}$ by
\begin{equation}
  (\tilde{\bm{\Sigma}}^{+})^{1/2}\vect{Q}_1(\tilde{\bm{\Sigma}}^{+})^{1/2}
\end{equation}
Let
\begin{equation}
  \R = \int (\D^{TP})^H\d^{TP}(\d^{TP})^H\D^{TP} d\omega.
\end{equation}
Then,
\begin{equation}
  (\tilde{\bm{\Sigma}}^{+})^{1/2}\vect{Q}_1(\tilde{\bm{\Sigma}}^{+})^{1/2} = \begin{bmatrix}
    (\bm{\Sigma}^{TP})^{-1/2}\R(\bm{\Sigma}^{TP})^{-1/2} & \bm{0}\\
    \bm{0} & \bm{0}
  \end{bmatrix}
\end{equation}
We also have that $\vect{Q}_0 = \bm{0}$ and $(\bm{\Sigma}^{DP})^{+} = \bm{0}$.  This leads to the
test statistic~\eqref{eq:lmda1}
in Theorem~\ref{thm:binhyp} being equivalent to~\eqref{eq:lmda2}.
\end{document}